\documentclass[12pt]{article}

\pdfpagewidth 8.5in
\pdfpageheight 11in

\textwidth = 6.5 in
\textheight = 9 in
\oddsidemargin = 0 in
\evensidemargin = 0 in
\topmargin = 0.0 in
\headheight = 0.0 in
\headsep = 0.0 in

\setlength{\parindent}{0pt}
\setlength{\parskip}{8pt}

\usepackage{amsmath,amssymb,amsthm,url,caption,subcaption,hyperref}
\usepackage[pdftex]{graphicx}
\DeclareGraphicsExtensions{.pdf, .jpg, .tif}

\renewcommand{\epsilon}{\varepsilon}
\renewcommand{\emptyset}{\varnothing}

\newcommand{\grad}{\nabla}

\usepackage{color}

\newcommand{\less}{\ \backslash\ } 

\newcommand{\st}{\,\mid\,}

\newcommand{\set}[1]{\{#1\}}

\theoremstyle{plain}

\newtheorem{theorem}{Theorem}

\newtheorem{lemma}[theorem]{Lemma}

\theoremstyle{definition}

\theoremstyle{remark}

\newtheorem{remark}[theorem]{Remark}

\numberwithin{equation}{section}
\numberwithin{theorem}{section}
 
\newcommand{\plus}{(+)}
\newcommand{\minus}{(-)}

\newcommand{\erho}{\rho}

\newcommand{\xqed}[1]{%
  \leavevmode\unskip\penalty9999 \hbox{}\nobreak\hfill
  \quad\hbox{\ensuremath{#1}}}

\usepackage{caption,subcaption}

\newcommand{\D}{\mathcal D} 
\newcommand{\R}{\mathbb R}

\begin{document}

\title{Stationary Black Hole Metrics and Inverse Problems in Two Space Dimensions }
\author{
Gregory Eskin\footnote{Email: \url{eskin@math.ucla.edu}}\ \
and 
Michael Hall\footnote{Email: \url{michaelhall@math.ucla.edu}}
\\
{\small Department of Mathematics, UCLA, Los Angeles, CA 90095-1555}
}

\maketitle


\begin{abstract}
We study the wave equation for a stationary Lorentzian metric in the case of two space dimensions. Assuming that the metric has a singularity of the appropriate form, surrounded by an ergosphere which is a smooth Jordan curve, we prove the existence of a black hole with the boundary  (called  the event  horizon)
that  is piece-wise smooth,  generally  having  corners.  
We consider  a physical  model  of acoustic black hole  whose event  horizon has corners.  In  the end  of the paper  we consider  the determination  of a black 
hole by  the boundary  measurements.
\end{abstract}


\section{Introduction}

Consider the wave equation associated to a stationary metric on $\mathbb{R}^{1+2} \cong \mathbb{R}^1_{x_0} \times \mathbb{R}^2_{(x_1,x_2)}$,
\begin{align}\label{eqn1.1}
\sum_{i,j=0}^2 \frac{1}{\sqrt{g(x)}}\frac{\partial}{\partial x_i} \left( \sqrt{g(x)}g^{ij}(x) \frac{\partial u(x_0,x)}{\partial x_j} \right) = 0, \quad (x_0,x) \in \mathbb{R}^{1+2}. 
\end{align}
Here, $[g^{ij}(x)]_{i,j=0}^2$ is the inverse of $[g_{ij}(x)]_{i,j=0}^2$, where $g_{ij}(x) \in C^\infty(\mathbb{R}^{1+2};\mathbb{R})$ defines a pseudo-Riemannian metric with signature $(+1,-1,-1)$ depending only on $x$, with $g_{ij}(x) = g_{ji}(x)$, and $g(x) = \det [g_{ij}(x)]_{i,j=0}^2$. 


For some choices of $g^{jk}(x)$, equation (\ref{eqn1.1}) has a black hole, i.e.\ a region which disturbances may not propagate out of. These are often called \emph{analogue} or \emph{artificial} black holes, since the metric is in general not a solution of the Einstein equations of general relativity \cite{Wald2010},\cite{FrolovNovikov1998}  (A  precise  definition  of black and white holes  is given below).

Two of the most common examples arising from physical models are \emph{optical black holes} (see \cite{Gordon},\cite{LP},\cite{Philbin2008},\cite{Belgiorno2011dielectric}) and \emph{acoustic black holes} (see \cite{Unruh},\cite{VisserAcoustic}). In optics, equation (\ref{eqn1.1}) is a model for the propagation of light through an inhomogeneous moving medium, while in acoustics, it models the propagation of acoustic waves in a moving fluid. Physicists are interested in physical systems which may contain analogue black holes, as they may be suitable for experimental study, while providing some insight into phenomena of general relativity. A number of other models have been studied, including surface waves, relativistic acoustic waves, Bose-Einstein condensates and others \cite{RousseauxLeonhardt2010},\cite{SchutzholdUnruh2002}, \cite{VisserMolinaParis2010},\cite{FFLKT}. See \cite{Visser2012},\cite{BarceloLiberatiVisser},\cite{NovelloVisserVolovik} for surveys and many references.

We define an \emph{event horizon} for (\ref{eqn1.1}) to be a Jordan curve $S_0 \subseteq \mathbb{R}^2$ such that $\mathbb{R} \times S_0$ is piecewise characteristic and forward null-geodesics either can not pass from the interior to the exterior of $S_0$, or vice versa. See section \ref{null}. In the former case we will say that the region enclosed by $\mathbb{R} \times S_0$ is a \emph{black hole}, and in the latter case we call it a \emph{white hole}.

Let $O = (0,0)$ be a singularity of the metric and assume that $g^{jk}$ behaves near $O$ as in \cite{EskinNonstationary}: When $|x| < \epsilon$, assume that
\begin{align}\label{eqn1.2}
g^{jk}(x) = g_1^{jk}(x) + g_2^{jk}(x),
\end{align}
where $g_1^{jk}$ is similar to an acoustic metric (see also Section \ref{Acoustic}):
\begin{align}\label{eqn1.3}
g_1^{00} = 0,\quad  g_1^{j0} = g_1^{0j} = v^j, j = 1,2,\quad g_1^{jk} = v^jv^k,\ j,k=1,2,
\end{align}
where in polar coordinates $x_1 = r\cos\theta$, $x_2 = r\sin\theta$, $\hat r = (\frac{x_1}{r},\frac{x_2}{r})$, $\hat \theta = (-\frac{x_2}{r},\frac{x_1}{r})$, we have 
\begin{align}\label{eqn1.4}
v = (v^1,v^2) = \frac{b_1}{r}\hat r + \frac{b_2}{r} \hat \theta,
\end{align}
where $b_j = b_j(\theta)$, $j = 1,2$ are smooth with $b_1(\theta) \neq 0$. Assume also that $g_2^{jk}$ is smooth in $(r,\theta)$, with $g_2^{00} \geq C > 0$, $g_2^{j0} = g_2^{0j} = \mathcal{O}(r)$, $1 \leq j \leq 2$, and $[g_2^{jk}]_{j,k=1}^2$ a negative definite matrix when $|x| < \epsilon$:
\begin{align*}
([g_2^{jk}]_{j,k=1}^2 \alpha,\alpha) \leq -C_0 |\alpha|^2, \quad \alpha \in \mathbb{R}^2. 
\end{align*}

Let $\Delta(x) = g^{11}(x)g^{22}(x) - (g^{12}(x))^2$.
Define the \emph{ergoregion} to be the set $\Omega \subseteq \mathbb{R}^2$ where
  $g_{00}(x)<0$.  By the Cramer rule  $g_{00}(x)=\Delta(x)g(x)$  with  $g(x)$  as in  (\ref{eqn1.1}).  Thus
  $\Delta(x) < 0$ is equivalent to $g_{00}(x) < 0$. Assume the boundary $\partial \Omega = \set{\Delta(x) = 0}$, 
  called the \emph{ergosphere}, is a Jordan curve encircling $O$ that is smooth in the sense that the gradient of $\Delta(x)$ is nonzero on $\partial \Omega$. 

In  \cite{EskinInvHyp}, it was shown that if the ergosphere is a smooth characteristic surface or non-characteristic surface which contains a trapped surface, then it contains a black hole or a white hole. See also \cite{EskinNonstationary} and \cite{HallThesis}. In this paper we prove the following much more general result:

\begin{theorem}\label{maintheorem}
Let $g$ be any metric such that the ergosphere $\{\Delta(x) = 0\}$ for equation (\ref{eqn1.1}) is a Jordan curve which is smooth in the sense that 
$\frac{\partial \Delta(x)}{\partial x}\neq 0$ when $\Delta(x)=0$,
and the ergoregion 
$\Omega = \{\Delta(x) < 0\}$ contains a singularity $O$, which satisfies (\ref{eqn1.2})-(\ref{eqn1.4}).
Then there exists a black hole in $\mathbb{R} \times \overline{\Omega}$ if $b_1(\theta) < 0$, and there exists a white hole if $b_1(\theta) > 0$. Moreover, the event horizon may have corner points, while it is continuously differentiable outside these corner points.
\end{theorem}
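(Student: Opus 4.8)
The plan is to turn the construction of the event horizon into a two-dimensional problem about the field of characteristic directions in $\overline{\Omega}$, to build the horizon as the boundary of the set of points from which no forward null-geodesic escapes the ergoregion, and then to read off the black/white hole dichotomy and the regularity statement from the structure of the null-geodesic flow near $O$. The starting observation is that a cylinder $\mathbf{R}\times S$ over a smooth curve $S\subseteq\overline{\Omega}$ is a characteristic hypersurface for (\ref{eqn1.1}) exactly when the spatial conormal $\nu$ of $S$ satisfies $\sum_{j,k=1}^{2}g^{jk}\nu_j\nu_k=0$. Since $\Delta(x)=g^{11}g^{22}-(g^{12})^{2}<0$ on $\Omega$, this quadratic form in $\nu$ is indefinite there and has exactly two conormal null lines, giving two smooth tangent line fields $\ell_{+},\ell_{-}$ on $\Omega$ which merge into a single line field on the ergosphere $\{\Delta=0\}$. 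An event horizon is therefore a Jordan curve in $\overline{\Omega}$ encircling $O$ whose tangent is, piecewise, in $\ell_{+}$ or in $\ell_{-}$; concretely I would work with $K:=\{\,y\in\Omega:\text{ no forward null-geodesic from }(0,y)\text{ reaches }\mathbf{R}\times(\mathbf{R}^{2}\setminus\overline{\Omega})\,\}$, which is closed since its complement is open, and prove that $S_0:=\partial K$ is the desired horizon.

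The next step is the local analysis at $O$. Substituting (\ref{eqn1.3})--(\ref{eqn1.4}), $\sum_{j,k=1}^{2}g^{jk}\nu_j\nu_k=(v\cdot\nu)^{2}+\sum_{j,k=1}^{2}g_2^{jk}\nu_j\nu_k$, where the last form is negative definite and $|v|=\mathcal{O}(1/r)$; hence on the characteristic cone $v\cdot\nu\to 0$ as $r\to 0$, so both $\ell_{\pm}$ tend to the line spanned by $v$ and the characteristic curves wind around $O$ (logarithmic spirals $r'/r\approx b_1/b_2$ when $b_2\not\equiv 0$). Feeding this into the bicharacteristic system of (\ref{eqn1.1}) with $\xi_0$ held fixed, I would compute the radial velocity of the forward null-geodesic flow in a punctured neighbourhood of $O$ and show that its sign is that of $b_1$ (a constant, since $b_1$ is smooth and nonvanishing): when $b_1<0$ every forward null-geodesic that enters a small punctured disc about $O$ spirals into $O$ and never returns to $\partial\Omega$, so $O\in K$ and $K$ has nonempty interior, while for $b_1>0$ the same holds for the backward flow.

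The global step is to pass from this local picture to a Jordan curve on the annulus $\overline{\Omega}\setminus\{O\}$. Combining the local result with the behaviour of $\ell_{+},\ell_{-}$ as $\Delta\uparrow 0$, a Poincar\'e--Bendixson-type analysis of these line fields shows that $K$ is a nonempty closed set containing $O$ whose boundary $S_0=\partial K$ is a Jordan curve encircling $O$ and contained in $\overline{\Omega}$ (possibly meeting $\partial\Omega$). The usual argument that the boundary of such a trapped set is, where it is a hypersurface, characteristic and ruled by null-geodesics of (\ref{eqn1.1}) then shows $\mathbf{R}\times S_0$ is piecewise characteristic; and since $\ell_{+},\ell_{-}$ are smooth line fields in the interior of $\Omega$, $S_0$ is continuously differentiable away from the at most countably many points where its ruling generator passes from the $\ell_{+}$-family to the $\ell_{-}$-family, which are exactly the corner points. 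That a single family need not close up into a Jordan curve, so that corners genuinely occur, is exhibited by the explicit acoustic model of Section~\ref{Acoustic}.

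Finally one checks the one-way membrane property. At a smooth point of $S_0$ one of the two null-geodesics through it is tangent to $S_0$ and the other transverse, and the sign computation at $O$, transported along $S_0$, shows that for $b_1<0$ the transverse one enters $K$ when traced forward in time, while at a corner both incident characteristic directions point into $K$; hence for $b_1<0$ no forward null-geodesic crosses $S_0$ from inside to outside, so $\mathbf{R}\times\overline{K}$ is a black hole, and reversing the time orientation yields the white hole for $b_1>0$. The principal obstacle is the global step: controlling the two characteristic line fields on the annulus between $O$ and $\partial\Omega$ — above all their degeneration as $\Delta\uparrow 0$ — precisely enough to identify $\partial K$ with a single piecewise-smooth Jordan curve, to pin down where a switch between $\ell_{+}$ and $\ell_{-}$ (hence a corner) is forced, and to exclude pathologies such as an accumulation of corners. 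The spiralling near $O$ is what makes a closed characteristic curve exist at all; turning the resulting limit set into an honest event horizon is the delicate part.
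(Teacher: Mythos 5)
Your overall strategy---reduce to the two fields of zero-energy characteristic directions $f^{\pm}$ on $\Omega$, use the sign of $b_1$ to make $\{|x|<\epsilon\}$ a trapped region, build the horizon as the boundary of a non-escaping set via Poincar\'e--Bendixson, and verify the one-way property by light-cone containment---is the same skeleton as the paper's. But the proposal defers exactly the step that constitutes the paper's proof. You write that ``a Poincar\'e--Bendixson-type analysis of these line fields shows that $K$ is a nonempty closed set \ldots whose boundary $S_0=\partial K$ is a Jordan curve,'' and then name this as ``the principal obstacle.'' The paper does not get this in one stroke: it defines $\Omega^{+}$ as the union of $\plus$-trajectories reaching $S_\epsilon$, proves a classification lemma (any boundary curve of $\Omega^{+}$ interior to $\Omega$ is either a closed orbit or a characteristic segment tangent to $\partial\Omega$ at the endpoint reached as $x_0\to+\infty$, with the quantitative estimate $|dx_0|\geq C^{-1}\,d(\et+|\theta-\theta_0|)/(\et+|\theta-\theta_0|)$ forcing infinite time), then iteratively trims $\Omega$ by $\plus$-separatrices and afterwards by $\minus$-separatrices, treating separately the cases of finitely many tangential points, finitely many characteristic segments of $\partial\Omega$, and the general case with countably many intervals $(\alpha_k,\beta_k)$. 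Without this construction you have no control over whether $\partial K$ is a single Jordan curve, whether corners accumulate, or whether the two families interleave consistently; asserting the conclusion is not a proof.

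Two of your specific claims are also off. First, you say the corner points are ``exactly'' the points where the ruling generator switches between the $\ell_{+}$- and $\ell_{-}$-families. That is false: at tangential points of the ergosphere both families are tangent to $\partial\Omega$, and the paper shows (Remark \ref{remark3.7} and the estimates $|\erho_j(\theta)|\leq C\max|g^{\erho\theta}(0,\theta)|$, $|d\erho/d\theta|\leq C(|g^{\erho\theta}(0,\theta)|+\sqrt{\erho})$) that the horizon is $C^1$ there, even though the family switches; corners occur only where a $\plus$ and a $\minus$ segment meet \emph{transversally} in the interior of $\Omega$. The $C^1$ statement at accumulation points of the closed set of tangential points is the genuinely delicate part of the regularity claim and is entirely absent from your argument. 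Second, your verification of the no-escape property ignores the points of $\partial\Omega\cap S_0$: there $g_{00}=0$, the forward light cone degenerates to a full half-space $\Pi^{\pm}_{\nu}$, and one must argue by taking limits $\hat x_n\to\hat x$ of interior no-escape points to conclude $\overline{K_+(\hat x)}\subseteq\overline{\Pi^-_{\nu}}$; ``transporting the sign computation at $O$ along $S_0$'' does not address this degeneration. (Minor: $b_1(\theta)$ is of constant sign by continuity and nonvanishing, but it is not a constant.)
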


The plan of the paper is as follows. In Section \ref{null}, we discuss the general behavior of null geodesics for metrics satisfying the hypotheses of Theorem \ref{maintheorem}. In Section \ref{proof}, we prove the existence of a black or white hole and show that the event horizon is $C^1$, except at corner points. 
In Section \ref{Acoustic}, we study acoustic black holes and demonstrate that the event horizon may have corners. 
 In Section 5 we study  the determination  of black holes  horizon  by the boundary measurements  on $\R\times\D$  where  $\D\subset \R^n$  is  
 a bounded  domain  containing  
an   ergoregion  $\Omega$.  Assuming  that  the conditions of Theorem 1.1 are satisfied   and no point of  $\partial\Omega$   is characteristic 
 we prove  that the boundary measurements determine  the black hole's event horizon  inside $\Omega$  up to a change of variables.

\section{Null Geodesics}
\label{null}

\subsection{Zero-energy null geodesics in the ergoregion}

Consider bicharacteristics for the wave equation (\ref{eqn1.1}),
\begin{align*}
\frac{dx_p}{ds} = 2 \sum_{k=0}^2 g^{pk}(x(s))\xi_k(s), \quad \frac{d \xi_p}{ds} = -\sum_{j,k=0}^2 g^{jk}_{x_p}(x(s))\xi_j(s)\xi_k(s), \quad 0 \leq p \leq 2.
\end{align*}

Since the metric is stationary we have that $\xi_0(s)$ is constant. Consider null-bicharacteristics with $\xi_0(s) = 0$. We shall call null-bicharacteristics with $\xi_0(s) = 0$ `zero-energy' null-bicharacteristics. Their projections onto $(x_1,x_2)$ will be called zero-energy null-geodesics. For all $s$, $x = x(s)$ and $(\xi_1,\xi_2) = \xi = \xi(s)$ must satisfy
\begin{align}
\label{charnull}
\sum_{j,k =  1}^2 g^{jk}(x)\xi_j\xi_k = 0, \quad (\xi_1,\xi_2) \neq (0,0),\ \ \ x\in \Omega.
\end{align}
For each $x \in \Omega$ there are two linearly independent solutions $\xi^\pm = (\xi_1^\pm,\xi_2^\pm)$ of (\ref{charnull}). It was shown in \cite{EskinInvHyp}, for $|x| > \epsilon$, and in \cite{EskinNonstationary}, for $|x| < \epsilon$, that there exists a pair of continuous vector fields $f^\pm(x) = (f_1^\pm(x),f_2^\pm(x))$ on $\overline \Omega \less O$, satisfying
\begin{equation}\label{vfs}
\begin{gathered} 
0 \neq f^+(x) = f^-(x), \ x \in \partial \Omega, \\
f^+(x),f^-(x) \text{ linearly independent},\ x \in \Omega \less O,\\
f_1^\pm(x) \xi_1^\pm + f_2^\pm(x) \xi_2^\pm = 0,\ (\xi_1^\pm,\xi_2^\pm)\text{ solving (\ref{charnull})}.  
\end{gathered}
\end{equation}
The choice of sign is arbitrary, but the pair $f^\pm(x)$ is otherwise well-defined up to rescalings which respect (\ref{vfs}). If we parameterize zero-energy null-bicharacteristics $(x^\pm(x_0),\xi^\pm(x_0))$ by $x_0$, then we have
\begin{align}\label{eqn2.3}
\frac{dx_j^\pm}{dx_0} = \frac{g^{j1}(x(x_0))\xi_1^\pm(x_0) + g^{j2}(x(x_0))\xi_2^\pm(x_0)}{ g^{01}(x(x_0))\xi_1^\pm(x_0) + g^{02}(x(x_0))\xi_2^\pm(x_0)}, \quad j = 1,2.
\end{align}
Since $f_1^\pm(x(x_0)) \xi_1^\pm(x_0) + f_2^\pm(x(x_0)) \xi_2^\pm(x_0) = 0$ we have that $\xi_1^\pm(x_0) = f_2^\pm(x(x_0))$, $\xi_2^\pm(x_0) = -f_1^\pm(x(x_0))$ up to a nonzero factor. Substituting into (\ref{eqn2.3}), we get
\begin{align}\label{eqn2.4}
\frac{dx_j^\pm}{dx_0} = \frac{g^{j1}f_2^\pm(x) - g^{j2}f_1^\pm(x)}{g^{10}f_2^\pm(x) - g^{20}f_1^\pm(x)}, \quad j = 1,2.
\end{align}
In other words, \emph{the zero-energy null-geodesics in $\Omega \less O$, are the solutions $x = x^+(x_0)$, $x = x^-(x_0)$ of an autonomous system of differential equations}. We shall call the two families of solution curves or trajectories for (\ref{eqn2.4}) the $\plus,\minus$ families, respectively.

Note that 
\begin{align}\label{eqn2.5}
 \frac{dx_2^\pm}{dx_1^\pm} = \frac{g^{21}f_2^\pm - g^{22}f_1^\pm}{g^{11}f_2^\pm - g^{12}f_1^\pm} = \frac{f_2^\pm}{f_1^\pm}
\end{align}
since $g^{21}f_2^\pm f_1^\pm - g^{22}(f_1^\pm)^2 = g^{11}(f_2^\pm)^2 - g^{21}f_1^\pm f_2^\pm$. 
Since the rank of $[g^{jk}(x)]_{j,k=1}^2$ is equal to 1 in $\partial \Omega$ we get $\frac{dx_j^\pm}{dx_0} = 0$, $j = 1,2$, on $\partial \Omega$, but $\frac{dx_2^\pm}{dx_1^\pm}$ has a limit on $\partial \Omega$. Note also that \cite{EskinInvHyp} 
\begin{align}\label{eqn2.6}
g^{10}f_2^\pm - g^{20}f_1^\pm \neq 0,
\end{align}
As in \cite{EskinInvHyp}, we have $f^\pm(x)\cdot \nabla G^\pm(x) = 0$, where $G^\pm(x) = c^\pm$ are characteristic curves. From (\ref{eqn2.4}), (\ref{eqn2.5}) it follows that this is also true when $f^\pm(x)$ is replaced by the right hand sides of (\ref{eqn2.4}). 

\subsection{Coordinates near $\partial\Omega$.}

Introduce coordinates $(\erho,\theta)$ near $\partial \Omega$, where $\erho = -\Delta(x) \geq 0$ in $\Omega$,
$\theta\in [0,2\pi]$  is a parameter on $\partial\Omega$. 
One can extend such coordinates to the whole domain $\Omega \less O$ but 
we shall only use them when $0 \leq \rho \leq \erho_0$ for some small $\erho_0 > 0$. In $(\erho,\theta)$ coordinates, (\ref{charnull}) is replaced by
\begin{align*}
g^{\erho\erho}\xi_\erho^2 + 2g^{\erho \theta }\xi_\erho\xi_\theta + g^{\theta \theta }\xi_\theta ^2 = 0, \quad (\xi_\erho,\xi_\theta) \neq (0,0).
\end{align*}
Note  that 
$$
(g^{\rho\theta})^2-g^{\rho\rho}g^{\theta\theta}=C^2(\rho,\theta)\rho,
$$
where  $C(\rho,\theta)>0$. We  shall denote  $C^2(\rho,\theta)\rho$  by  $\rho_1(\rho,\theta)$.  Thus  $C\sqrt \rho=\sqrt {\rho_1}$.
Either $g^{\erho \erho}$ or $g^{\theta \theta }$ is not zero when $\erho = 0$ since the rank of $[g^{jk}]_{j,k=1}^2$ is 1 on $\partial \Omega$.
 Let $\erho = 0$, $\theta = \theta_0$ be such that $g^{\theta\theta}(0,\theta_0) \neq 0$. Near $(0,\theta_0)$, we write the solutions
\begin{align*}
\xi_\theta ^\pm = \frac{-g^{\erho\theta} \pm \sqrt{\erho_1}}{g^{\theta \theta }} \xi_\erho^\pm. 
\end{align*}
Then in $(\erho,\theta)$ coordinates, (\ref{eqn2.4}) gives  
\begin{equation}\label{eqn2.7}
\begin{aligned}
\frac{d\erho^\pm}{dx_0} 
&= 
\frac{ g^{\erho\erho}\xi_\erho^\pm + g^{\erho\theta}\xi_\theta^\pm }
{ g^{0 \erho}\xi_\erho^\pm + g^{0 \theta}\xi_\theta^\pm } 
= 
\frac{ g^{\erho\erho} + g^{\erho\theta}\frac{-g^{\erho\theta}\pm \sqrt{\erho_1}}
{g^{\theta \theta }} }
{ g^{0 \erho}+ g^{0 \theta} \frac{-g^{\erho\theta}\pm \sqrt{\erho_1}}{g^{\theta \theta} }}
= 
\frac{-\erho_1\pm g^{\erho \theta}\sqrt{\erho_1}}{b(\erho,\theta)\pm g^{0\theta}\sqrt{\erho_1}}
\\
\frac{d \theta^\pm}{dx_0}
&= 
\frac{g^{\erho\theta}\xi_\erho^\pm + g^{\theta \theta}\xi_\theta^\pm}
{g^{0\erho}\xi_\erho^\pm + g^{0 \theta}\xi_\theta^\pm}
= 
\frac{[g^{\erho\theta}+(-g^{\erho\theta}\pm \sqrt{ \erho_1})]}
{g^{0\erho} + g^{0\theta} \frac{-g^{\erho\theta}\pm \sqrt{\erho_1}}{g^{\theta \theta}}}
= 
\frac{ \pm g^{\theta \theta } \sqrt\erho_1}
{b(\erho,\theta) \pm g^{0\theta}\sqrt{\erho_1}},
\end{aligned}
\end{equation}
where $b(\erho,\theta) = g^{0\erho}g^{\theta\theta} - g^{0\theta}g^{\erho\theta} \neq 0$ (see (\ref{eqn2.6})).  

\subsection{Types of boundary points.}
If $g^{\erho\theta}(0,\theta_0) \neq 0$ then 
\begin{align}\label{eqn2.8}
\frac{d\erho^\pm}{d\theta} = \mp \frac{\sqrt{\erho_1}}{g^{\theta\theta}} + \frac{g^{\erho\theta}}{g^{\theta\theta}}
\end{align}
is not zero near $(0,\theta_0)$, i.e.\ the curve $\erho = \erho^\pm(\theta)$ is transverse to the boundary $\erho = 0$ near $(0,\theta_0)$. It follows from (\ref{eqn2.7}) that the trajectories $(\erho^\pm(x_0),\theta^\pm(x_0))$ reach the boundary $\erho = 0$ in finite time. Since 
\begin{equation}\label{eqn2.9}
\frac{d\erho^\pm}{dx_0} = \pm \frac{g^{\erho\theta}(0,\theta)}{b(0,\theta)}\sqrt{\erho_1} + \mathcal{O}(\erho)
\end{equation}
near $(0,\theta_0)$, one family of trajectories approaches the boundary as $x_0$ increases while the other leaves the boundary as $x_0$ increases. 

Make a change of variables $w = \sqrt\erho$.
Denote  $w_1=\sqrt{\rho_1}=Cw$.
 Since $\frac{d\erho}{dx_0} = 2w \frac{dw}{dx_0}$, so we get
\begin{align}\label{eqn2.10}
2 \frac{dw}{dx_0} = \frac{ \pm C g^{\erho\theta}(w^2,\theta) - C^2 w}
{b(w^2,\theta) \pm g^{0\theta}(w^2,\theta)w}, 
\quad 
\frac{d \theta }{dx_0} = \frac{ \pm C g^{\theta \theta}(w^2,\theta)w}
{b(w^2,\theta) \pm g^{0\theta}(w^2,\theta)w},
\end{align}
where $b(0,\theta_0) \neq 0$, $g^{\theta \theta}(0,\theta_0) \neq 0$. 
If $g^{\erho \theta}(0,\theta_0) = 0$ then $(0,\theta_0)$ is a tangential point of $\partial\Omega$. 

If $\frac{\partial  g^{\erho \theta}}{\partial \theta}(0,\theta_0) \neq 0$ then $(0,\theta_0)$ is a non-degenerate critical 
point in $(w,\theta)$ coordinates. It could be a node, saddle, degenerate node, or spiral restricted to the half-space 
$w \geq 0$  (cf.  Fig. 2-4).

Consider now the case when $g^{\theta\theta}(0,\theta_0) \neq 0$, $g^{\erho\theta}(0,\theta_0) = 0$, and $g^{\erho\theta}_\theta(0,\theta_0) = 0$. 
To fix ideas suppose $g^{\theta\theta}(0,\theta_0) < 0$. Then the equation (\ref{eqn2.8}) has the following form in $(w,\theta)$ coordinates:
\begin{align}\label{eqn2.11}
2w \frac{dw^\pm}{d\theta} = \frac{\mp C w+ g^{\erho\theta}(w^2,\theta)}{g^{\theta\theta}(w^2,\theta)}
\end{align}
\begin{lemma}\label{Lemma2.1}
There is a $\plus$ solution of (\ref{eqn2.11}) satisfying $$w_*^+(\theta) = a_1(\theta) + w_1^+(\theta), \quad |w_1^+(\theta)| \leq C|\theta-\theta_0|^2,$$ defined on $(\theta_0,\theta_0+\delta)$, $\delta > 0$ small, where 
$$a_1(\theta) = \int_{\theta_0}^\theta a(\theta')\, d\theta', \quad a(\theta) = -\frac{1}{2g^{\theta\theta}(0,\theta)}.$$

Analogously, there is a $\minus$ solution of (\ref{eqn2.11})  satisfying $$w_*^-(\theta) = -a_1(\theta) + w_1^-(\theta), \quad |w_1^-(\theta)| \leq C|\theta-\theta_0|^2,$$ defined on $(\theta_0-\delta,\theta_0)$, $\delta$ small, with $a_1(\theta)$ as above. 
\end{lemma}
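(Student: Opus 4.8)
The plan is to prove the statement for the $\plus$ family; the $\minus$ family is entirely analogous, with $\theta$ decreasing through $\theta_0$ and $a_1$ replaced by $-a_1$. The idea is to substitute the ansatz $\et = a_1(\theta) + \et_1(\theta)$ into (\ref{eqn2.11}), derive a mildly singular integral equation for the remainder $\et_1$, and solve it by the contraction mapping principle on the weighted space $X = \{\et_1 \in C[\theta_0,\theta_0+\delta] : \et_1(\theta_0) = 0,\ |\et_1(\theta)| \le C(\theta-\theta_0)^2\}$.

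First I would fix notation. Write $\tau = \theta-\theta_0$ and $G_0(\theta) = g^{\theta\theta}(0,\theta)$, and record the Taylor expansions $g^{\theta\theta}(\et^2,\theta) = G_0(\theta) + \et^2\tilde G(\et^2,\theta)$ and $g^{\erho\theta}(\et^2,\theta) = g^{\erho\theta}(0,\theta) + \et^2\tilde h(\et^2,\theta)$ with $\tilde G,\tilde h$ smooth and $g^{\theta\theta},G_0$ bounded away from $0$ near $\theta_0$. The tangency hypotheses $g^{\erho\theta}(0,\theta_0) = g^{\erho\theta}_\theta(0,\theta_0) = 0$ give $g^{\erho\theta}(0,\theta) = \tau^2\tilde p(\theta)$ with $\tilde p$ smooth, and $G_0(\theta_0) < 0$ makes $a(\theta) = -1/(2G_0(\theta)) > 0$ near $\theta_0$, so $a_1(\theta) \ge c\tau > 0$ for $\tau \in (0,\delta)$; in particular $\et_*^+ = a_1 + \et_1^+ > 0$ on $X$, consistent with $\et = \sqrt{\erho} \ge 0$.

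The heart of the argument is a cancellation. Substituting $\et = a_1 + \et_1$, $\et' = a + \et_1'$ into the $\plus$ version of (\ref{eqn2.11}) and writing the left side as $2\et\et' = 2\et(a + \et_1') = -\et/G_0 + 2\et\,\et_1'$ (using $2a = -1/G_0$), the term $-\et/G_0$ is exactly the leading part of the right side; subtracting it, the equation becomes $\et\,\et_1' = \Psi(\theta,\et_1)$ with
\begin{align*}
\Psi(\theta,\et_1) = \tfrac12\Bigl( \frac{\et^3\tilde G}{G_0\, g^{\theta\theta}} + \frac{g^{\erho\theta}(0,\theta) + \et^2\tilde h}{g^{\theta\theta}} \Bigr), \qquad \et := a_1(\theta) + \et_1 .
\end{align*}
Since $g^{\erho\theta}(0,\theta) = \mathcal{O}(\tau^2)$ and $\et^2 = (a_1+\et_1)^2 = \mathcal{O}(\tau^2)$ on $X$, one has $\Psi(\theta,0) = \mathcal{O}(\tau^2)$ and $\partial_{\et_1}\Psi = \partial_\et\Psi = \mathcal{O}(\et) = \mathcal{O}(\tau)$. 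Dividing by $\et = a_1 + \et_1 \ge \tfrac12 c\tau$ (valid on $X$ once $\delta$ is small, since then $C\tau^2 \le \tfrac12 a_1$), consider
\begin{align*}
(T\et_1)(\theta) = \int_{\theta_0}^\theta \frac{\Psi(\theta',\et_1(\theta'))}{a_1(\theta') + \et_1(\theta')}\, d\theta' .
\end{align*}
Its integrand is $\mathcal{O}(\tau')$ and Lipschitz in $\et_1$ with constant $\mathcal{O}(1)$ (in the difference quotient $\partial_{\et_1}[\Psi/(a_1+\et_1)]$, numerator and denominator are both $\mathcal{O}((\tau')^2)$). Hence, for $C$ large and $\delta$ small, $T$ maps $X$ into itself, $|(T\et_1)(\theta)| \le \int_{\theta_0}^\theta \mathcal{O}(\tau')\,d\tau' = \mathcal{O}(\tau^2)$, and $\|T\et_1 - T\tilde\et_1\|_\infty \le \mathcal{O}(\delta)\|\et_1 - \tilde\et_1\|_\infty$, so $T$ is a contraction. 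Its fixed point $\et_1^+$ satisfies $\et\,(\et_1^+)' = \Psi(\theta,\et_1^+)$ on $(\theta_0,\theta_0+\delta)$, which is equivalent to (\ref{eqn2.11}) for $\et_*^+ = a_1 + \et_1^+$, and $|\et_1^+(\theta)| \le C|\theta-\theta_0|^2$ as claimed.

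The step I expect to be the main obstacle — or at least the one requiring genuine care — is controlling the singularity of the reduced equation at $(\et,\theta) = (0,\theta_0)$: the factor $1/\et \sim 1/(c\tau)$ in $\et_1' = \Psi/\et$ is only borderline integrable, and the scheme closes solely because (i) the defining relation $a = -1/(2g^{\theta\theta}(0,\cdot))$ is precisely what removes the $\mathcal{O}(\tau)$ part of the right-hand side, and (ii) the tangency hypothesis $g^{\erho\theta}(0,\theta_0) = g^{\erho\theta}_\theta(0,\theta_0) = 0$ upgrades $g^{\erho\theta}(0,\theta)$ from $\mathcal{O}(\tau)$ to $\mathcal{O}(\tau^2)$. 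Absent either, one would only get $\Psi = \mathcal{O}(\tau)$, $\Psi/\et$ would fail to be integrable, and no bound of the form $|\et_1| \le C\tau^2$ could hold. Everything else — the Taylor bookkeeping and checking the contraction constants — is routine.
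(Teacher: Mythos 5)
Your proof is correct and follows essentially the same route as the paper: the same ansatz $\et = a_1(\theta)+\et_1(\theta)$ with $a = -1/(2g^{\theta\theta}(0,\cdot))$ chosen to cancel the leading term, reduction to an integral equation for the remainder whose integrand is $\mathcal{O}(\theta-\theta_0)$ thanks to $g^{\erho\theta}(0,\theta)=\mathcal{O}((\theta-\theta_0)^2)$ and $a_1(\theta)\ge c(\theta-\theta_0)$, and a contraction mapping argument on the set $|\et_1|\le C(\theta-\theta_0)^2$ (the paper phrases this via the weighted norm $\sup|h(\theta)|/(\theta-\theta_0)^2$, which is equivalent to your closed subset of $C[\theta_0,\theta_0+\delta]$).
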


\begin{proof}
We rewrite equation (\ref{eqn2.11}) 
\begin{align*}
\frac{dw^+(\theta)}{d\theta} 
&= a(\theta) + \frac{g^{\erho\theta}(0,\theta)}{2w g^{\theta\theta}(0,\theta)} + \frac{g_1(w^2,\theta)}{w},
\end{align*}
where $|g_1(w^2,\theta)| \leq Cw^2$. Let $g_2(w,\theta) = \frac{g_1(w^2,\theta)}{w}$, 
$g_3(\theta) = \frac{g^{\erho\theta}(0,\theta)}{2g^{\theta\theta}(0,\theta)}$. Then for $w_1^+(\theta)$ we get
\begin{align}\label{eqn2.12}
\frac{dw_1^+}{d\theta} = \frac{g_3(\theta)}{a_1(\theta) + w_1^+(\theta)} + g_2(a_1(\theta) + w_1^+(\theta)), \quad w_1^+(\theta_0) = 0.
\end{align}
Let $B$ be the Banach space with norm $\| h \| = \sup_{\theta_0 \leq \theta \leq \theta_0 + \delta} \frac{|h(\theta)|}{(\theta-\theta_0)^2}$. The integral from $\theta_0$ to $\theta$ of the right hand side of (\ref{eqn2.12}) is a contraction mapping in $B$ if $\delta$ is small. Therefore $w_1^+(\theta)$ exists. 

The proof for $w_1^-(\theta)$ is similar. 
\end{proof}

\begin{remark}
Note that the lemma remains valid when $g^{\erho\theta}_\theta(0,\theta_0) \neq 0$ but is small. \xqed{\lozenge}
\end{remark}

\begin{remark}\label{rmk2.2}
Suppose $\partial \Omega$ contains a characteristic segment $L$. Let $\hat x$ be an interior point of $L$.  Since the boundary $w = 0$ is characteristic 
for all $\theta$ in a neighborhood of $\hat x$ we have $g^{\erho \theta}(0,\theta) = 0$, i.e.\ $g^{\erho \theta}(w^2,\theta) = \mathcal{O}(w^2)$. 
Therefore by (\ref{eqn2.11}),
\begin{align*}
\frac{dw^\pm}{d \theta} = \frac{\mp C}{2g^{\theta \theta}(w^2,\theta)} + \mathcal{O}(w).
\end{align*}
Note that $g^{\theta \theta}(w^2,\theta) \neq 0$ near $\hat x$. Also, 
\begin{align*}
\frac{dw^\pm}{d x_0} = \frac{ -w C^2}{2b(w^2,\theta)} + \mathcal{O}(w^2). 
\end{align*}
Since $b(w^2,\theta) < 0$ we have that $w^\pm(x_0)$ increases when $x_0$ increases. 
Therefore we have two zero-energy null-geodesics on the set $w \geq 0$ that start at $\hat x$. 

In $(\erho,\theta)$ coordinates these two zero-energy null-geodesics are tangent to the boundary $\erho = 0$. The same picture is true for 
any $\theta_1$ close to $\theta_0$. Note that $w = 0$ is an envelope of both the $\plus$ and $\minus$ families near $(0,\theta_0)$. 
Also  note  that  the point  $(0,\theta_0)$  where  $g^{\theta\theta}(0,\theta_0)=0$     is not  characteristic.
\qed
\end{remark}

\section{Existence of a black hole}
\label{proof}

We shall consider the case when $b_1(\theta) < 0$ and show the existence of a black hole. The case when $b_1(\theta) > 0$ may be treated similarly. 

Consider a small circle $\{ |x| = \epsilon \}$ around $O$. Since $b_1 < 0$, an integral curve of either the $\plus$ or $\minus$ family starting at $\{ |x| = \epsilon \}$ goes to $O$ as $x_0$ increases, i.e.\ $\set{|x| < \epsilon}$ is a trapped region; see \cite{EskinInvHyp}. Let $\Omega^+$ be the union of all trajectories of the $\plus$ family in $\Omega \less \{ |x| \leq \epsilon \}$ which end at $\{ |x| = \epsilon \}$, i.e.
\begin{multline}
\Omega^+ = \{ x^+(x_0) \st x_0 \in (\ell,0)\ \mbox{where}\ -\infty\leq \ell<0;\ x^+\text{ solves }(\ref{eqn2.4}) ; \\
x^+(x_0) \in \Omega \text{ for } x_0 \in (\ell,0);\ x^+(0) \in \{ |x| = \epsilon \} ; \text{ and } x^+(\ell) \in \partial \Omega \text{ when }\ell > -\infty \}.
\end{multline}
%

\begin{lemma}\label{twoposs}
Suppose $z_0 \in \partial\Omega^+$ is an interior point of $\Omega$. Let $\gamma_0^+$ be a curve of the $\plus$ family passing through $z_0$, parameterized
 $x = x^+(x_0)$.
Then there are two possibilities:
\begin{enumerate}
\item $\gamma_0^+$ is a characteristic segment with endpoints $\alpha_1,\alpha_2 \in \partial \Omega$, with $\gamma_0^+$ tangent to $\partial \Omega$ at 
$\alpha_1 = \lim_{x_0 \to \infty} x^+(x_0)$.

\item $\gamma_0^+$ is a smooth closed periodic orbit.
\end{enumerate}
In both cases, $\gamma_0^+ \subseteq \partial \Omega^+$.
\end{lemma}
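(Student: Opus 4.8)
The plan is to study the boundary $\partial\Omega^+$ as an invariant set of the autonomous system (\ref{eqn2.4}) and invoke a Poincaré–Bendixson-type dichotomy. First I would establish that $\Omega^+$ is an open set whose boundary, away from $\partial\Omega$ and away from $O$, consists of whole trajectories of the $\plus$ family: if $z_0 \in \partial\Omega^+$ is an interior point of $\Omega$ and is not on $\overline{\{|x|\le\epsilon\}}$, then by continuous dependence on initial conditions for (\ref{eqn2.4}), the entire trajectory $\gamma_0^+$ through $z_0$ lies in $\partial\Omega^+$ — indeed, points just on one side of $\gamma_0^+$ flow into $\Omega^+$ (they terminate at $\{|x|=\epsilon\}$) while points on the other side do not, and this is an open condition that propagates along the flow. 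This gives the final sentence ``$\gamma_0^+\subseteq\partial\Omega^+$'' once the two cases are identified, and it also shows $\gamma_0^+$ cannot itself reach $\{|x|=\epsilon\}$ (else $z_0$ would be interior to $\Omega^+$, using that trajectories hitting that circle do so transversally by the trapped-region property).

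Next I would analyze the forward limit set $\omega(\gamma_0^+)$. Since $\gamma_0^+$ stays in the compact region $\overline{\Omega}\setminus\{|x|<\epsilon\}$ for all $x_0 \ge 0$ (it is not allowed to exit through $\partial\Omega$ or into the trapped disc, being a boundary trajectory), the $\omega$-limit set is nonempty, compact and connected. The system (\ref{eqn2.4}) is a genuine planar autonomous system on $\Omega\setminus O$ with no equilibria in the interior (the right-hand side of (\ref{eqn2.4}) vanishes only where the rank of $[g^{jk}]$ drops, i.e.\ only on $\partial\Omega$, by (\ref{eqn2.6}) for the denominator and the rank-one structure for the numerator). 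By Poincaré–Bendixson, $\omega(\gamma_0^+)$ is therefore either a periodic orbit or a set containing boundary points of $\Omega$. In the periodic case one must upgrade ``$\gamma_0^+$ has a periodic $\omega$-limit'' to ``$\gamma_0^+$ \emph{is} that periodic orbit'': this follows because $\gamma_0^+ \subseteq \partial\Omega^+$ is itself a limit of trajectories terminating on $\{|x|=\epsilon\}$, so it cannot spiral toward a limit cycle without nearby trajectories doing likewise and hence never reaching $\{|x|=\epsilon\}$ — contradicting the definition of $\Omega^+$ — forcing $\gamma_0^+$ to be closed; smoothness of the periodic orbit is just smoothness of solutions of (\ref{eqn2.4}) at interior points. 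This yields case (2).

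In the remaining case $\omega(\gamma_0^+)$ meets $\partial\Omega$, and I would use the local normal forms of Section \ref{null} near $\partial\Omega$. By (\ref{eqn2.8}) a $\plus$ trajectory is transverse to $\partial\Omega$ at any point where $g^{\erho\theta}(0,\theta)\neq 0$, and the sign computation after (\ref{eqn2.8}) shows the $\plus$ family \emph{leaves} $\erho=0$ as $x_0$ increases at such points — so $\gamma_0^+$ cannot approach $\partial\Omega$ forward in time through a transverse point. Hence the approach to $\partial\Omega$ must occur at a tangential point $\alpha_1$, where $g^{\erho\theta}(0,\theta)=0$; by Remark \ref{rmk2.2} such points lie on a characteristic segment $L\subseteq\partial\Omega$, along which (again by Remark \ref{rmk2.2}) $\et^\pm(x_0)$ increases, so $\gamma_0^+$ is tangent to $L$ at $\alpha_1=\lim_{x_0\to\infty}x^+(x_0)$ and coincides with a characteristic curve (using $f^\pm\cdot\grad G^\pm=0$ from the end of Section \ref{null}, together with (\ref{eqn2.5})). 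Tracing $\gamma_0^+$ backward gives the other endpoint $\alpha_2\in\partial\Omega$ by the same reasoning applied to the $\minus$ description / Lemma \ref{Lemma2.1}, with $\gamma_0^+$ a characteristic segment in between, hitting $\{|x|=\epsilon\}$ nowhere — this is case (1).

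The main obstacle I anticipate is the rigorous exclusion of pathological $\omega$-limit behavior near $\partial\Omega$: because $\partial\Omega$ is a curve of equilibria (the vector field vanishes there), the standard Poincaré–Bendixson theorem does not apply verbatim, and one must rule out, e.g., $\gamma_0^+$ accumulating on an arc of $\partial\Omega$ containing infinitely many tangential points, or converging to $\partial\Omega$ while winding infinitely. The fix is to desingularize via the $\et=\sqrt{\erho}$ substitution (\ref{eqn2.10})–(\ref{eqn2.11}) and the estimates of Lemma \ref{Lemma2.1}, which show that in $(\et,\theta)$ coordinates the trajectories have finite-time, non-oscillatory behavior near generic boundary points and enter/leave tangentially at the special points — this converts the degenerate boundary into a boundary where trajectories cross in finite time or are genuine characteristic segments, restoring a clean dichotomy.
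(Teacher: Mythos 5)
The decisive step in your case-(1) analysis is wrong. You claim that the sign computation following (\ref{eqn2.8}) shows the $\plus$ family always \emph{leaves} $\erho=0$ as $x_0$ increases at transverse points, so that $\gamma_0^+$ cannot reach $\partial\Omega$ forward in time except tangentially. But that computation, $\frac{d\erho^\pm}{dx_0}=\pm\frac{g^{\erho\theta}(0,\theta)}{b(0,\theta)}\sqrt{\erho}+\mathcal{O}(\erho)$, only says that at each transverse point exactly \emph{one} of the two families approaches the boundary while the other leaves; which one is which depends on $\mathrm{sgn}\,(g^{\erho\theta}/b)$ and flips across tangential points. In general $\partial\Omega$ contains open arcs on which $\plus$ trajectories terminate transversally, in finite time, as $x_0$ increases --- this is exactly the role of the intervals $(\alpha_k,\beta_k)$ in the paper's treatment of the general case, and of the condition $Q<0$ in the acoustic example of Section \ref{Acoustic}. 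So your stated reason for tangency at $\alpha_1$ proves too much and is false. The correct argument is a perturbation argument that uses the hypothesis $z_0\in\partial\Omega^+$: transversal intersection with $\partial\Omega$ is an open condition, so if $\gamma_0^+$ ended transversally at $\alpha_1$, then \emph{every} $\plus$ trajectory through a small neighborhood $\mathcal{U}_\epsilon$ of $z_0$ would also end on $\partial\Omega$ and never reach $\set{|x|=\epsilon}$, giving $\mathcal{U}_\epsilon\cap\Omega^+=\emptyset$ and contradicting $z_0\in\partial\Omega^+$.

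A secondary inaccuracy: you invoke Remark \ref{rmk2.2} to place the tangential point $\alpha_1$ on a characteristic segment $L\subseteq\partial\Omega$. Tangential points need not lie on characteristic segments; they may be isolated (nodes, saddles, or spirals of the desingularized system in $(\et,\theta)$), which is precisely the situation in the examples. All that is needed at $\alpha_1$ is $g^{\erho\theta}(0,\theta_0)=0$, which gives both the tangency ($\frac{d\erho^+}{d\theta}\to 0$) and, via $|g^{\erho\theta}(\et^2,\theta)|\le C(\et+|\theta-\theta_0|)$ inserted into (\ref{eqn2.10}), the differential inequality $|dx_0|\ge C^{-1}\,d(\et+|\theta-\theta_0|)/(\et+|\theta-\theta_0|)$, which shows $\alpha_1=\lim_{x_0\to+\infty}x^+(x_0)$ --- a point your write-up asserts but does not prove. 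The rest of your outline (whole trajectories lie in $\partial\Omega^+$ by continuous dependence; Poincar\'e--Bendixson in the interior; exclusion of spiraling onto a limit cycle because nearby trajectories would then also fail to reach $\set{|x|=\epsilon}$) is consistent with the paper, and in the limit-cycle case is somewhat more careful than the paper's one-line assertion that $\gamma_0^+=\gamma_1^+$.
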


\begin{proof}
Note that $z_0 \not \in \Omega^+$ since $\Omega^+$ is open. First suppose the curve $\gamma_0^+$ has endpoints $\alpha_1,\alpha_2 \in \partial \Omega$ 
with $x^+(x_0)$ directed toward $\alpha_1$ when $x_0$ increases. Since $z_0$ is an interior point of $\Omega$, there is a small neighborhood 
$\mathcal{U}_\epsilon$ of $z_0$ contained in $\Omega$. The curves of the $\plus$ family passing through points of $\mathcal{U}_\epsilon$ 
form a ``strip'' $V_\epsilon$.

Since $z_0 \in \partial \Omega^+$ there exist $z_n,z_n' \in \mathcal{U}_\epsilon$, $z_n \to z_0$, $z_n' \to z_0$ 
such that $z_n \in \Omega^+$, $z_n' \not\in \Omega^+$. Therefore there are $\plus$ trajectories $x_n(x_0)$ in the strip $V_\epsilon$ 
belonging to $\Omega^+$ with $x_n(x_{0n}) = z_n$, and trajectories $x_n'(x_0)$ not belonging to $\Omega^+$ with $x_n'(x_{0n}') = z_n'$. 
If $z^{\scriptscriptstyle(1)}$ is any other interior point of $\gamma_0^+$ then the trajectories  $x_n(x_0)$ and $x_n'(x_0)$ come arbitrarily close to 
$z^{\scriptscriptstyle(1)}$. Therefore $z^{\scriptscriptstyle(1)} \in \partial \Omega^+$. 

We claim $\gamma_0^+$ is tangent to $\partial \Omega$ at $\alpha_1$. Indeed if $\gamma_0^+$ is transversal to $\partial \Omega$ at $\alpha_1$, then all $\plus$ curves in $V_\epsilon$ also intersect $\partial \Omega$ transversally when $\epsilon > 0$ is small. Therefore all $\plus$ curves in $V_\epsilon$ end at $\partial \Omega$, and do not reach $\{|x| = \epsilon \}$. This contradicts the fact that $V_\epsilon$ contains $\plus$ curves belonging to $\Omega^+$. 

To show $\alpha_1 = \lim_{x_0 \to +\infty} x^+(x_0)$, we  use (\ref{eqn2.10}). Since $g^{\erho\theta}(0,\theta_0) = 0$ we have $|g^{\erho\theta}(w^2,\theta)|
 \leq C(w + |\theta - \theta_0|)$. Thus, 
\begin{align*}
\left| \frac{dw}{dx_0} \right| \leq C(w + |\theta - \theta_0|), \quad \left| \frac{d(\theta-\theta_0)}{dx_0} \right| \leq Cw. 
\end{align*}
Therefore
\begin{align*}
\left| \frac{d(w + |\theta-\theta_0|)}{dx_0} \right| \leq C(w+ |\theta - \theta_0|),
\end{align*}
and 
\begin{align*}
|dx_0| \geq \frac{1}{C} \frac{d(w + |\theta-\theta_0|)}{w + |\theta - \theta_0|}.
\end{align*}
Hence $x_0 \to +\infty$ when $w + |\theta-\theta_0| \to 0$. At the point $\alpha_2$ the curve $\gamma_0^+$ 
may be either transversal or tangent to $\partial \Omega$. If it is tangent then, analogously, $\alpha_2 = \lim_{x_0 \to -\infty} x^+(x_0)$.

If $\gamma_0^+$ can be extended indefinitely when $x_0 \to +\infty$ or $-\infty$ without approaching $\partial \Omega$ then the corresponding limit set of $\gamma_0^+$ is a closed orbit $\gamma_1^+$ by the Poincar\' e-Bendixson theorem. Since $\gamma_0^+ \subseteq \partial \Omega^+$, we also have
 $\gamma_1^+ \subseteq  \partial \Omega^+$, and hence $\gamma_0^+ = \gamma_1^+ = \partial \Omega^+$. This concludes the proof of Lemma \ref{twoposs}.
\end{proof}

\subsection{The case of finitely many tangential points}
\label{finite}

\subsubsection{Construction of the event horizon}

Suppose the ergosphere $\partial \Omega$ has a finite number of points $\alpha_1, \ldots ,\alpha_m$ such that the normals to
 $\partial \Omega$ at $\alpha_p$, $1 \leq p \leq m$ are characteristic directions, i.e.\ $\sum_{j,k=1}^2 g^{jk}(\alpha_p)\nu_j(\alpha_p)\nu_k(\alpha_p) = 0$ 
 where $\nu = (\nu_1,\nu_2)$ is the outward normal to $\partial \Omega$.  In other words, the vector fields $f^\pm$ 
 are tangent to $\partial \Omega$ at $x = \alpha_p$, $1 \leq p \leq m$. 

As in Lemma \ref{twoposs}, let $z_0 \in \partial \Omega^+$ be an interior point of $\Omega$, and let $\gamma_0 \subseteq  \partial \Omega^+$ be a characteristic curve passing through $z_0$. Suppose that $\gamma_0$ can be continued indefinitely as $x_0$ decreases and does not approach $\partial \Omega$, and hence $\gamma_0$ is a closed periodic orbit belonging to the $\plus$ family. If the trajectories of the $\minus$ family are directed inside $\gamma_0$ when $x_0$ increases then $\mathbb{R} \times \gamma_0$ is a black hole event horizon, while if they are directed outside it is a white hole event horizon. In the latter case any trajectory of the $\minus$ family ending at $S_\epsilon = \{ |x| = \epsilon  \}$ can not reach $\gamma_0$ as $x_0$ decreases. Therefore by the Poincar\' e-Bendixson theorem there exists a periodic orbit $\gamma_1$ inside the domain bounded by $\gamma_0$ and belonging to the $\minus$ family. Then $\mathbb{R} \times \gamma_1$ is a black hold event horizon. 

Suppose that instead $\gamma_0$ is a characteristic segment connecting points $\beta_{02}, \beta_{01} \in \partial \Omega$, where by Lemma \ref{twoposs} at least one of $\beta_{0j}$ must be a characteristic point. Then $\gamma_0$ divides the domain $\Omega$ into parts $\Omega_1, \Omega_1'$ and we shall assume that $\Omega_1$ contains $\Omega^+$. Then consider $\Omega_1$ instead of $\Omega$. Suppose $z_1 \in \partial \Omega^+$ is in the interior of $\Omega_1$, and let $\gamma_1$ be a characteristic segment with endpoints $\beta_{11}, \beta_{12} \in \partial \Omega_1$, where at least one of $\beta_{11},\beta_{12}$ is a tangential point. 

Since $\gamma_0$ and $\gamma_1$ belong to the $\plus$ family it is impossible that an endpoint of $\gamma_1$ belongs to the interior of $\gamma_0$, and similarly an endpoint of $\gamma_0$ does not coincide with an endpoint of $\gamma_1$ unless both curves are tangential to $\partial \Omega$ at this point. Thus the only possibilities are that $\gamma_0,\gamma_1$ do not intersect, or they share a common tangential point in $\partial \Omega$.
The curve $\gamma_1$ divides $\Omega_1$ into pieces $\Omega_2,\Omega_2'$, where $\Omega_2$ contains $\Omega^+$. Replace $\Omega_1$ by $\Omega_2$. 

If there is a point $z_2 \in \partial \Omega^+$ such that $z_2$ is an interior point of $\Omega_2$, we repeat the previous argument, etc. After finitely many steps we get a domain $\Omega_r$ such that $\overline{ \Omega^+ } =\overline{ \Omega_r }$ and $\Omega_r$ is a domain whose boundary consists of a finite number of characteristic segments $\gamma_0,\gamma_1, \ldots ,\gamma_{r-1}$ of the $\plus$ family and a finite number of segments $\delta_1, \ldots ,\delta_q$ of $\partial \Omega$. Since $\delta_j \subseteq \partial \Omega \cap \overline{ \Omega^+ }$, $1 \leq j \leq q$, the $\plus$ family of solutions must be directed into $\Omega^+$ on $\cup_{j = 1}^q \delta_j$. 

Denote by $\Omega_r^-$ the union of all trajectories of the $\minus$ family in $\Omega_r$ that end on $S_\epsilon$. Note that $\Omega_r^-$ is an open set.

Since the open segments $\delta_j$, $1 \leq j \leq q$, are not characteristic, and trajectories of the $\plus$ family start on $\delta_j$, we have that trajectories of the $\minus$ family end on $\delta_j$, $1 \leq j \leq q$. Note that no trajectory of the $\minus$ family that ends on $S_\epsilon$ can end on $\delta_j$, $1 \leq j \leq q$. This means that $\overline{\Omega_r^-}$ does not touch the interior of $\delta_j$, $1 \leq j \leq q$.

Now  apply  to the  trajectories  of $(-)$  family
in $\Omega_r$  and $\Omega_r^-$
  the same arguments  as for the trajectories of  $(+)$  family  in $\Omega$  and  $\Omega^+$

 After a finite number of steps we get a domain $\Omega_{r+p}$ such that $\overline{ \Omega_{r+p} } = \overline{ \Omega_r^-}$ 
and the boundary of $\Omega_{r+p}$ consists of a finite number of characteristic segments $\gamma_0^-, \ldots ,\gamma_{p-1}^-$
 of the $\minus$ family and some of the characteristic segments $\gamma_0, \ldots ,\gamma_{r-1}$, or parts of them, belonging to the $\plus$ family. 
 Since some of the segments $\gamma_k$ may have been truncated by the above procedure, the boundary of $\Omega_{r+p}$ may not be smooth, 
 as some $\gamma_j^-,\gamma_k$ may intersect at a corner (cf. Section \ref{Acoustic}).

\subsubsection{The domain $\mathbb{R} \times \Omega_{r+p}$ is a black hole}

To show that $\mathbb{R} \times \Omega_{r+p}$ is a black hole we shall show that any point $(\hat x_0, \hat x) \in \mathbb{R} \times \partial \Omega_{r+p}$ is a no-escape point. More precisely, let $K_+(\hat x)$ be the forward light cone at $(\hat x_0,\hat x)$, consisting of all $(\dot x_0,\dot x_1, \dot x_2) \in \mathbb{R}^1 \times \mathbb{R}^2$ such that $\sum_{j,k=0}^2 g_{jk}(\hat x) \dot x_j \dot x_k > 0$, $\dot x_0 > 0$. Denote by $\Pi^\pm_{\nu(\hat x)}$ the half-space $\set{(\alpha_0,\alpha_1,\alpha_2) \st \alpha_1 \nu_1 + \alpha_2 \nu_2 \gtrless 0}$, where $\nu(\hat x) = (\nu_1,\nu_2)$ is the outward normal to $\partial \Omega_{r+p}$ at $x = \hat x$. Then $\hat x$ is a point of no escape if $K_+(\hat x) \subseteq \Pi_{\nu(\hat x)}^-$ for all $x_0 \in \mathbb{R}$, see \cite{EskinInvHyp}. We have several cases to consider.

Let $\hat x$ be an interior point of the characteristic segment $\gamma \subseteq \partial \Omega$. If $\gamma$ belongs to the $\plus$ family then the construction of $\Omega_{r+p}$ shows that the curves of the $\minus$ family intersect $\gamma$ and directed inside $\Omega_{r+p}$ when $x_0$ increases. Since $\gamma$ is a characteristic curve it follows from \cite{EskinInvHyp} that $K_+(\hat x)$ is contained in either $\Pi_\nu^+$ or in $\Pi_\nu^-$. The tangent vector of the curve of the $\minus$ family passing through $\hat x$ is the projection onto $(x_1,x_2)$ of a forward null bicharacteristic and belongs to $\Pi_\nu^-$. Therefore $K_+(\hat x) \subseteq \Pi_\nu^-$, i.e.\ $\hat x$ is a point of no escape. 

If $\gamma$ is a characteristic curve of the $\minus$ family and $\widetilde  x \in \gamma$, then the $\plus$ family curve passing through $\widetilde x$ is the projection of a forward null bicharacteristic and its tangent vector at $\widetilde x$ belongs to $\Pi_{\nu_1}^-$, so again $K_+(\widetilde x) \subseteq  \Pi_{\nu_1}^-$, i.e.\ $\widetilde  x$ is also a point of no escape. Here $\nu_1$ is the exterior normal to $\gamma^{\scriptscriptstyle(1)}$ at $\widetilde  x$. Let $x^{\scriptscriptstyle(1)}$ be a corner point of $\partial \Omega_{r+p}$ at the intersection of characteristic segments $\gamma_+$, $\gamma_-$ belonging to the $\plus$, $\minus$ families, respectively. Let $\nu_+$, $\nu_-$ be the exterior normals to $\gamma_+$, $\gamma_-$ at the point $x^{\scriptscriptstyle(1)}$. As above we get that $K_+(x^{\scriptscriptstyle(1)}) \subseteq  \Pi_{\nu_+}^-$, $K_+(x^{\scriptscriptstyle(1)}) \subseteq  \Pi_{\nu_-}^-$, i.e.\ $K_+(x^{\scriptscriptstyle(1)}) \subseteq  \Pi_{\nu_+}^- \cap \Pi_{\nu_-}^-$. Therefore $x^{\scriptscriptstyle(1)} \in \partial \Omega_{r+p}$ is also a point of no escape, since any vector of $K_+(x^{\scriptscriptstyle(1)})$ points inside $\Omega_{r+p}$. 

Let now $\hat x \in \partial \Omega_{r+p}$ be a tangential point on $\partial \Omega$. It follows from \cite{EskinInvHyp} that either $K_+(\hat x) \subseteq \Pi^+_\nu$ or $K_+(\hat x) \subseteq \Pi^-_\nu$. Since $\partial \Omega$ is the ergosphere, $g_{00}(\hat x) = 0$ \cite{EskinInvHyp}. Thus $(\dot x_0,\dot x_1, \dot x_2) = (1,0,0)$ belongs to $\overline{K_+(\hat x)}$ since $\sum_{j,k=0}^2 g_{jk}(\hat x) \dot x_j \dot x_k = g_{00}(\hat x) = 0$. Therefore $\overline{K_+(\hat x)}$ is tangent to the plane $\dot x_1 \nu_1 + \dot x_2 \nu_2 = 0$. Here $\nu = (\nu_1,\nu_2)$ is the outward normal to $\partial \Omega_{r+p}$ at $\hat x$. 

Suppose for a moment that $K_+(\hat x) \subseteq \Pi^-_\nu$. Since $(1,0,0) \in \overline{K_+(\hat x)}$, for any small $\epsilon > 0$, $(1,\epsilon \dot x_1, \epsilon \dot x_2) \in K_+(\hat x)$ when $\dot x_1 \nu_1 + \dot x_2 \nu_2 < 0$, for arbitrary $(\dot x_1,\dot x_2)$. Therefore $K_+(\hat x) = \Pi^-_\nu$ when $\hat x$ is a tangential point. Similarly, if $K_+(\hat x) \subseteq \Pi^+_\nu$, then $K_+(\hat x) = \Pi^+_\nu$.

Let $\hat x_n \to \hat x$, where $\hat x$ is a tangential point in $\partial \Omega$, and each $\hat x_n \in \partial \Omega_{r+p}$ is an interior point of $\Omega$. The points $\hat x_n$ are no-escape points for $\partial \Omega_{r+p}$, as was proven above. Note that $\partial \Omega_{r+p}$ is smooth in a neighborhood of $\hat x$. Since $\hat x_n$ are no-escape points we have $K_+(\hat x_n) \subseteq \Pi^-_{\nu_n}$, where $\nu_n$ is the outward unit normal to $\partial \Omega_{r+p}$ at $\hat x_n$. We have $\Pi^-_{\nu_n} \to \Pi^-_\nu$, $K_+(\hat x_n) \to K_+(\hat x)$. Therefore $\overline{K_+(\hat x)} \subseteq \overline{\Pi^-_\nu}$, i.e.\  $\hat x$ is a point of no escape. 

\begin{remark} \label{rmk3.3}
These arguments hold for any characteristic point $\hat x \in \partial \Omega$ such that there exists a sequence $\hat x_n \to \hat x$ with $K_+(\hat x_n) \subseteq \Pi_{\nu_n}^-$. 

Suppose we have a characteristic segment $\subseteq \partial \Omega$. At the endpoints of the segment we have a sequence of points $\hat x_n$ as above. Thus the endpoints are no escape points. For any interior point of the segment we get that $K_+(\hat x) \subseteq \Pi^-_\nu$ by continuity. \xqed{\lozenge}
\end{remark}

\begin{remark}
Note that if $\hat x \in \partial \Omega$ is not tangential then it is an escape point: There exists a characteristic direction $\nu_0$ which is not normal to $\partial \Omega$. Since $g_{00}(\hat x) = 0$ we have that $K_+(\hat x)$ is either equal to $\Pi^-_{\nu_0}$ or to $\Pi^+_{\nu_0}$. In both cases there are directions 
of $K_+(\hat x)$ which point toward the exterior of $\Omega$.\xqed{\lozenge}
\end{remark}

Therefore we have proven:

\begin{lemma}\label{lemma3.5}
$\mathbb{R} \times \partial \Omega_{r+p}$ is a black hole event horizon.
\end{lemma}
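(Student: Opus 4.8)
The proof is a matter of checking the three defining properties of a black-hole event horizon from Section \ref{null}: that $\partial\Omega_{r+p}$ is a Jordan curve, that $\mathbf{R}\times\partial\Omega_{r+p}$ is piecewise characteristic, and that no forward null-geodesic crosses it from the interior to the exterior. The bulk of the work — the third property — has already been carried out point by point in the discussion preceding the lemma, so the plan is mainly to assemble those estimates and to pass from the pointwise cone conditions to a statement about geodesics.

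First I would check that $\partial\Omega_{r+p}$ is a Jordan curve. It was produced from the ergoregion $\Omega$ (a Jordan domain containing $O$) by a finite sequence of cuts, at each stage slicing the current Jordan domain along a simple characteristic arc whose endpoints lie on its boundary and retaining the component containing $\Omega^+$, respectively $\Omega_r^-$; since cutting a Jordan domain along such an arc yields two Jordan domains, $\Omega_{r+p}$ is again a Jordan domain. Because at every stage we kept the component whose closure meets $S_\epsilon$, and $O$ lies in the trapped core enclosed by $S_\epsilon$, the curve $\partial\Omega_{r+p}$ encircles $O$, and the region it bounds (together with the trapped core) is the candidate black hole. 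Next, $\partial\Omega_{r+p}$ is piecewise characteristic: it is a finite concatenation of arcs, each of which is a piece of some $\gamma_k$ ($\plus$ family), some $\gamma_j^-$ ($\minus$ family), or a characteristic arc lying in $\partial\Omega$ as in Remark \ref{rmk2.2}, and on each such arc $\mathbf{R}\times(\text{arc})$ is characteristic because an integral curve of $f^\pm$ satisfies $f^\pm\cdot\nabla G^\pm=0$ with $G^\pm=c^\pm$ characteristic. Consecutive arcs meet only at corner points (a $\plus$ meeting a $\minus$ segment) or at tangential points of $\partial\Omega$, and the finiteness of the tangential set $\alpha_1,\dots,\alpha_m$ assumed in Section \ref{finite} guarantees the construction terminates, so these junctions are finite in number.

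For the no-escape property, the case analysis before the lemma establishes $K_+(\hat x)\subseteq\Pi^-_{\nu(\hat x)}$ for every $\hat x\in\partial\Omega_{r+p}$: at smooth non-tangential boundary points $K_+(\hat x)$ is the half-space on the inward side of the characteristic segment through $\hat x$; at a corner $K_+(x^{\scriptscriptstyle(1)})\subseteq\Pi^-_{\nu_+}\cap\Pi^-_{\nu_-}$; and at a tangential point of $\partial\Omega$ the identity $g_{00}(\hat x)=0$ forces $K_+(\hat x)=\Pi^-_\nu$, obtained by taking limits along smooth no-escape points $\hat x_n\to\hat x$ (Remark \ref{rmk3.3}). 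It then remains to convert this pointwise cone condition into non-escape of null-geodesics: a forward null-geodesic $x(x_0)$ has $(\dot x_0,\dot x_1,\dot x_2)\in\overline{K_+(x(x_0))}$ with $\dot x_0>0$, so whenever $x(x_0)\in\partial\Omega_{r+p}$ its spatial velocity lies in $\overline{\Pi^-_{\nu}}$ and never points strictly outward; a standard first-crossing / barrier argument then shows that a null-geodesic starting in the interior can never reach the exterior, so $\mathbf{R}\times\Omega_{r+p}$ is a black hole with event horizon $\mathbf{R}\times\partial\Omega_{r+p}$.

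The one step I expect to require genuine care is exactly this last conversion at the non-smooth points of $\partial\Omega_{r+p}$: a forward causal curve may hit a corner vertex or touch $\partial\Omega_{r+p}$ tangentially at a tangential point of $\partial\Omega$, and one must rule out its continuing to the exterior — at a corner because $\Pi^-_{\nu_+}\cap\Pi^-_{\nu_-}$ is the strictly inward wedge, so every admissible velocity points into $\Omega_{r+p}$, and at a tangential point because $\partial\Omega_{r+p}$ is smooth there and, immediately past the point of contact, the geodesic lies at ordinary interior no-escape points where the cone condition is again in force.
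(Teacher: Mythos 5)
Your proposal is correct and follows essentially the same route as the paper: the paper's proof of this lemma is precisely the preceding case-by-case verification that $K_+(\hat x)\subseteq \Pi^-_{\nu(\hat x)}$ at interior points of $\plus$ and $\minus$ characteristic segments, at corner points (via $\Pi^-_{\nu_+}\cap\Pi^-_{\nu_-}$), and at tangential points of $\partial\Omega$ (via $g_{00}=0$ and the limit along no-escape points $\hat x_n\to\hat x$), combined with the finite cutting construction that makes $\partial\Omega_{r+p}$ a piecewise characteristic Jordan curve. The only addition in your write-up is the explicit passage from the pointwise cone condition to non-escape of null-geodesics, which the paper delegates to the cited reference rather than arguing directly.
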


\subsection{The case when $\partial \Omega$ has finitely many characteristic segments and finitely many characteristic points}

Suppose there are finitely many open intervals $L_1, \ldots , L_m$ in $\partial \Omega$, with $\overline{ L_j } \cap \overline{ L_k } = \emptyset$, $j \neq k$, such that the vector fields $f^\pm(x)$ are tangent to $\partial \Omega$ along $L_j$, $1 \leq j \leq m$. (Note that $f^+ = f^-$ on $\partial \Omega$.) Assume in addition that there are finitely many isolated tangent points $\beta_1, \ldots ,\beta_r$. 

We again let the open set $\Omega^+$ be as in Lemma \ref{twoposs}, $z_0 \in \partial \Omega^+$ an interior point of $\Omega$, and $\gamma_0$ a curve of the $\plus$ family passing through $z_0$, with endpoints $\alpha_1,\alpha_2 \in \partial \Omega$ (unless $\gamma_0$ is a closed orbit, in which case we are done), (cf. the second part of Lemma \ref{twoposs}). 

We claim that it is impossible to have $\alpha_1 \in L_{j_1}, \alpha_2 \in L_{j_2}$. If this is the case, consider neighborhoods $U(\alpha_1,\epsilon_1) \subseteq L_{j_1}, U(\alpha_2,\epsilon_2) \subseteq L_{j_2}$. For $\epsilon_1,\epsilon_2$ small there are solutions of the $\plus$ family $x^+_\alpha(x_0)$ that are close to $\gamma_0$ and have endpoints $\alpha \in U(\alpha_1,\epsilon_1), \widetilde  \alpha \in U(\alpha_2,\epsilon_2)$. Note that $L_j$ is an envelope for the $\plus$ family (see Remark \ref{rmk2.2}). All such solutions $x^+_\alpha(x_0)$ are not in $\Omega^+$, so $z_0 \not \in \partial \Omega^+$. 

Also, from the proof of Lemma \ref{twoposs}, it is impossible to have $\gamma_0 \subseteq  \partial \Omega^+$ which intersects $\partial \Omega$ transversally at both endpoints. Analogously, there is no $\gamma_0 \subseteq  \partial \Omega^+$ with one endpoint belonging to some $L_j$ and the other intersecting $\partial \Omega$ transversally.

Therefore $\gamma_0$ must have at least one endpoint either among $\beta_1, \ldots ,\beta_r$ or among the endpoints of $\overline{ L_1 },  \ldots , \overline{ L_m }$. Thus there are a finite number of such curves. Following the proof of Lemma \ref{twoposs} we get that the boundary of $\Omega^+$ consists of a finite number of characteristic segments inside $\Omega$ of the $\plus$ family, a finite number of the segments $\overline{L_j}$, $1 \leq j \leq m$ or closed subintervals of $\overline{ L_j }$ and a finite number of segments of $\partial \Omega$ where $\plus$ family trajectories start as $x_0$ increases. Starting with $\overline{ \Omega^+ }$ instead of $\overline{ \Omega }$ we consider the open set $\Omega_1^- \subseteq \Omega^+$ of $\minus$ family trajectories ending on $S_\epsilon$, and it is clear that we may repeat the proof of Lemma \ref{twoposs}. We get after a finite number of steps that the boundary $\Omega_1^-$ consists of a finite number of characteristic segments or parts of characteristic segments inside $\Omega$, some belonging to the $\plus$ family and some to the $\minus$ family and some characteristic segments that are parts of $\cup_{j=1}^m L_j$. It follows from the proof of Lemma \ref{lemma3.5} and Remark \ref{rmk3.3} that $\mathbb{R} \times \Omega^-$ is a black hole event horizon. Note that the boundary of $\partial \Omega_1^-$ may have corners -- i.e.\ it may only be piecewise smooth.

\subsection{The general case}

Consider $\Omega^+$. 
We have that $\overline{ \Omega^+ }$ does not intersect any of the open intervals $(\alpha_k,\beta_k)$ in $\partial \Omega$ where $\plus$ family curves end as $x_0$ increases. There can be at most countably many such intervals. Denote by $\Omega_k^+$ the union of all $\plus$ family curves ending on $(\alpha_k,\beta_k)$ as $x_0$ increases. Note that $\Omega_k^+ \cap \Omega^+ = \emptyset$. Take any $z_0 \in \partial \Omega_k^+$ which is an interior point of $\Omega$. Denote by $\gamma_0$ the $\plus$ family curve passing through $z_0$. Then $\gamma_0$ ends at either $\alpha_k$ or $\beta_k$, say $\alpha_k$ to fix ideas. Let $\alpha_{k_1}$ be a point on $\partial \Omega$ where $\gamma_0$ starts. Denote by $\Omega_{k_1}$ the domain bounded by $\gamma_0$ and $\partial \Omega$ and not containing $O$. If $\Omega_{k_1}$ contains $\Omega_k^+$ we replace $\Omega$ by $\Omega_1 = \Omega \less \overline{ \Omega_{k_1} }$. If $\Omega_{k_1}$ does not contain $\Omega_k^+$ then there is another characteristic curve $\gamma^{\scriptscriptstyle(0)}$ belonging to the boundary of $\Omega_k^+$ and ending at $\beta_k$. Let $\beta_{k_1} \in \partial \Omega$ be the starting point of $\gamma^{\scriptscriptstyle(0)}$. Let $\Omega_k^{\scriptscriptstyle(1)}$ be the domain bounded by $\gamma^{\scriptscriptstyle(0)}$ and $\partial \Omega$ that contains $\Omega_{k_1}$ and $\Omega_k^+$ and we shall replace $\Omega$ by $\Omega \less \overline{\Omega_k^{\scriptscriptstyle(1)}}$. Note that $\partial (\Omega \less \overline{\Omega_k^{\scriptscriptstyle(1)}})$ does not contain $(\alpha_k,\beta_k)$. Note also that $\partial(\Omega \less \overline{\Omega_k^{\scriptscriptstyle(1)}})$ is smooth at $\beta_k$ but may have a corner at $\beta_{k_1}$. In the latter case $\beta_{k_1}$ belongs to an open inverval $(\sigma,\delta)$ where the curves of the $\plus$ family start. Consider any other interval $(\alpha_j,\beta_j)$, $j \neq k$, where curves of the $\plus$ family end. Let $\Omega_j^{\scriptscriptstyle(1)}$ be a domain constructed as with $\Omega_k^{\scriptscriptstyle(1)}$. Since curves of the $\plus$ family do not intersect in $\Omega$ we have that $\Omega_j^{\scriptscriptstyle(1)} \cap \Omega_k^{\scriptscriptstyle(1)} = \emptyset$. Note that $\overline{\Omega_j^{\scriptscriptstyle(1)}} \cap \overline{\Omega_k^{\scriptscriptstyle(1)}}$ is either empty or consists of at most two tangential points in $\partial \Omega$. Denote $\Omega_\infty^+ = \cap_{k=1}^\infty (\Omega \less \overline{\Omega_j^{\scriptscriptstyle(1)}}) = \Omega \less \cup_{j=1}^\infty \overline{\Omega_j^{\scriptscriptstyle(1)}}$.


The boundary $\partial \Omega_\infty ^+$ consists of characteristic segments of the $\plus$ family, a closed set of tangent points belonging to $\partial \Omega$, and intervals $(\sigma_k,\delta_k)$, $k = 1,2, \ldots $, or parts of such intervals, where the $\plus$ family of curves start when $x_0$ increases. We shall show (cf. below) that $\partial \Omega_\infty^+$ is smooth, except possibly at a countable number of corner points $\beta_{k_j}$ belonging to some of the open invervals $(\sigma_k,\delta_k)$. Now consider the union $\Omega^-_k$ of all $\minus$ family curves in $\Omega$ that end on $(\sigma_k,\delta_k)$ when $x_0$ increases. Let $z_1 \in \partial \Omega_k^-$ be an interior point of $\Omega_\infty^+$ and let $\gamma_1^-$ be the $\minus$ family curve passing through $z_1$. Let $(\sigma_{k_1},\sigma_k)$ be the endpoints of $\gamma_1^-$. Consider also the $\minus$ family curve $\gamma_-^{\scriptscriptstyle(1)}$ ending at $\delta_k$ and belonging to $\partial \Omega_k^-$. Here, it is possible that $\gamma_-^{\scriptscriptstyle(1)}$ is a single point. Let $\Omega_k^{\scriptscriptstyle(2)}$ be the domain bounded by $\partial \Omega$ and either $\gamma_1^-$ or $\gamma_-^{\scriptscriptstyle(1)}$, which contains $\Omega_k^-$ and does not contain $O$. To fix ideas let $\gamma_1^- \subseteq \partial \Omega_k^{\scriptscriptstyle(1)}$. Then we replace $\Omega_\infty^+$ by $\Omega_\infty^+ \less \overline{\Omega_k^{\scriptscriptstyle(2)}}$. 

If we have $\beta_{k_j} \in (\sigma_k,\delta_k) \cap \partial \Omega_\infty^+$ then $\beta_{k_j} \not\in  \partial (\Omega_\infty^+ \less \Omega_k^{\scriptscriptstyle(2)})$ since $(\sigma_k,\delta_k) \subseteq \Omega_k^{\scriptscriptstyle(2)}$. Denote by $\gamma_1^{\scriptscriptstyle(1)}$ the intersection of $\gamma_1^-$ with $\partial \Omega_\infty^+$. Then the endpoints of $\gamma_1^{\scriptscriptstyle(1)}$ are either tangential points of $\partial \Omega$ or corner points of $\partial \Omega_\infty^+$ belonging to the interior of $\Omega$. 

Repeating this procedure for all $(\sigma_k,\delta_k)$, $k = 1,2,\ldots$, and for all characteristic segments $\gamma_j^+$ such that the $\minus$ family curves end on $\gamma_j^+$, we get a domain $\Omega_\infty^- \subseteq \Omega_\infty^+$ such that the boundary of $\Omega_\infty^-$ consists of characteristic segments belonging to either the $\plus$ or $\minus$ family and a closed set $S_1 \subseteq \partial \Omega \cap \partial \Omega_\infty^-$ of tangential points.

We shall show that $\partial \Omega_\infty^-$ is continuously differentiable except at corner points. It is enough to show that $\partial \Omega_\infty^-$ is continuously differentiable at any point of $S_1 = \partial \Omega_\infty^- \cap \partial \Omega$. Let $x^{(0)}$ be any point of $S_1$. Introduce $(\erho,\theta)$ coordinates in a small neighborhood $U_0$ of $x^{(0)} = (0,\theta_0)$. We have by (\ref{eqn2.7}),(\ref{eqn2.8}),
\begin{equation}
\label{eqn3.2}
\begin{aligned}
\frac{d\erho^\pm}{d\theta} 
&= \frac{\pm \sqrt{\erho_1}+g^{\erho\theta}(\erho,\theta)}{g^{\theta\theta}(\erho,\theta)}, \\
\frac{d\erho^\pm}{dx_0} 
&= \frac{\pm g^{\erho\theta}(\erho,\theta)\sqrt{\erho_1}-\erho_1}{b(\erho,\theta) \pm g^{0\theta}(\erho,\theta)\sqrt{\erho_1}},
\end{aligned}
\end{equation}
where $g^{\theta\theta}(0,\theta_0) < 0$, $b(0,\theta_0) < 0$, $g^{\erho\theta}(0,\theta_0) = 0$. Since $U_0$ is small we may assume that $g^{\theta\theta} < 0$, $b(\erho,\theta) \pm g^{0\theta}(\erho,\theta)\sqrt{\erho_1} < 0$ in $U_0$. In $U_0$, there are at most countably many intervals $(\alpha_k,\beta_k)$ where $g^{\erho\theta}(0,\theta) > 0$ or $g^{\erho\theta} < 0$. Let $(\alpha_1,\beta_1)$ be such that $g^{\erho\theta}(0,\theta) > 0$ on $(\alpha_1,\beta_1)$. It follows from (\ref{eqn3.2}) that curves of the $\plus$ family end on $\set{\erho = 0}$ when $x_0$ increases. We shall prove that there exists a curve $\erho = \erho_1(\theta)$ of the $\plus$ family starting at $\alpha_1$ and ending at $\beta_1$ such that $\erho = \erho_1(\theta)$ is the boundary of all curves of the $\plus$ family ending on $(\alpha_1,\beta_1)$. Let $w = \sqrt{\erho}$. We have $g^{\erho\theta}(w^2,\theta) = c_1(w^2,\theta)w^2 + g^{\erho\theta}(0,\theta)$. Since $U_0$ is small we have by the contraction mapping theorem that
\begin{align*}
-\sqrt{\erho_1} + g^{\erho\theta}(\erho,\theta) = c_2(\sqrt{\erho},\theta)(-\sqrt{\erho} + g_1(\theta)),
\end{align*}
where $c_2 > 0$, $g_1(\theta) > 0$, $\theta \in (\alpha_1,\beta_1)$. 

\begin{figure}
\begin{centering}
 \includegraphics[width=.6\linewidth]{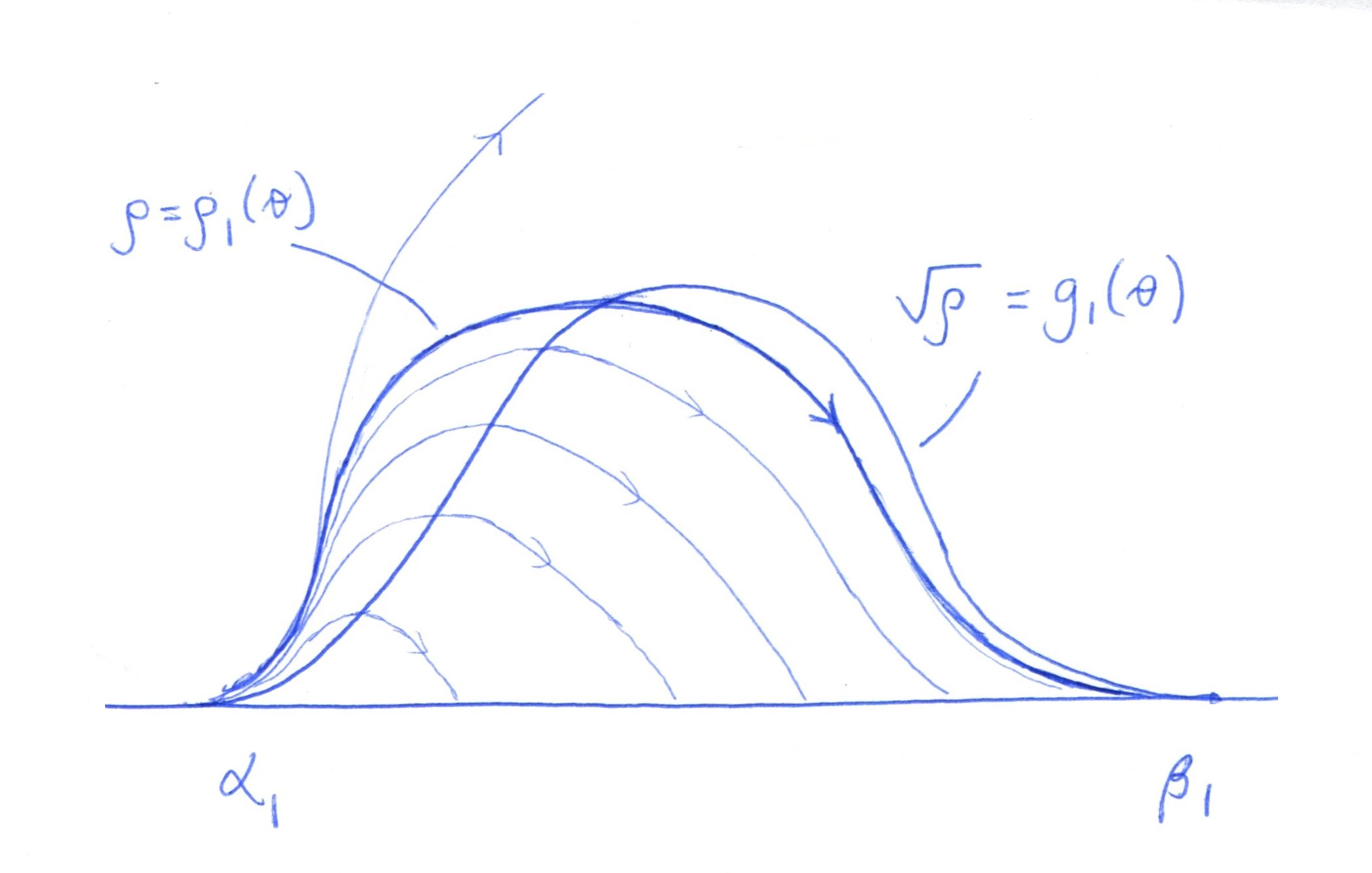}
 \caption{The curve $\rho = \rho_1(\theta)$ is the boundary of all curves of the $\plus$ family starting at $\alpha_1$ and ending on $(\alpha_1,\beta_1]$.}
 \label{figure}
\end{centering}
 \end{figure}

Consider the domain $V$ bounded by $w = g_1(\theta)$ and $w = 0$. We have that $\frac{d\erho}{d\theta} = 0$ when $w = g_1(\theta)$, $\frac{d\erho^+}{d\theta} < 0$ inside $V$ (since $g^{\theta\theta}< 0$) and $\frac{d\erho^+}{d\theta} > 0$ outside of $V$. Therefore curves $\erho = \erho_+(\theta)$ of the $\plus$ family that end at $(0,\theta')$, $\theta' \in (\alpha_1,\beta_1)$ increase when $\theta$ decreases until $\erho = \erho_+(\theta)$ intersects $\sqrt{\erho} = g_1(\theta)$. 
Then $\erho_+(\theta)$ decreases outside $V$ for $\alpha_1 < \theta < \beta_1$ when $\theta$ decreases. Note that $\erho = \erho_+(\theta)$ cannot cross the solution $\erho = (w_*^+(\theta))^2$ constructed in \ref{Lemma2.1}, since they belong to the same family. Therefore $\erho = \erho_+(\theta)$ must end at 
$\theta = \alpha_1$ (see Figure \ref{figure}).

Analogously if $(\alpha_2,\beta_2)$ is an interval in $U \cap \set{\erho = 0}$ such that $g^{\erho\theta}(0,\theta) < 0$, then there exists a $\minus$ family curve $\erho = \erho_2(\theta)$ that starts on $\beta_2$ and ends on $\alpha_2$ such that $\erho = \erho_2(\theta)$ is the boundary for all $\minus$ family curves that end at $(0,\theta)$, where $\theta \in (\alpha_2,\beta_2)$. 

Let $\erho = \erho(\theta)$ be a function on $U \cap \set{\erho = 0}$ equal to $\erho_k(\theta)$ on $(\alpha_k,\beta_k)$ and zero otherwise. The function $\erho = \erho(\theta)$ is the boundary of $\Omega_\infty^- \cap U$.

We shall show that $\erho = \erho(\theta)$ is continuously differentiable at any point $\partial \Omega_\infty^- \cap U$. Let $(0,\theta')$ be any point in $U_0$ such that $g^{\erho\theta}(0,\theta') = 0$. For any $\epsilon > 0$ there is $\delta > 0$ such that $|g^{\erho\theta}(0,\theta)| < \epsilon$ when $|\theta - \theta'| < \delta$. Let $(\alpha_j,\beta_j)$ be any interval in $(\theta'-\delta,\theta'+\delta)$ such that $|g^{\erho\theta}(0,\theta)| \neq 0$ on $(\alpha_j,\beta_j)$. We have
\begin{align*}
|\erho_j(\theta)| \leq \max_{[\alpha_j,\beta_j]} |g_j(\theta)| \leq C \max_{[\alpha_j,\beta_j]} |g^{\erho\theta}(0,\theta)| < C \epsilon.
\end{align*}

Therefore $|\erho(\theta)| < C\epsilon$ for $(\theta' - \delta,\theta' + \delta)$, i.e.\ $\lim_{ \theta \to \theta' } \erho(\theta) = 0$. This proves the continuity of $\erho(\theta)$. Analogously, $\left| \frac{d\erho(\theta)}{d\theta} \right| \leq C \left|g^{\erho\theta}(\erho(\theta),\theta) \right| + \sqrt{\erho} \leq C(|g^{\erho\theta}(0,\theta)| + \sqrt{\erho})$. Thus $\lim_{\theta \to \theta'} \frac{d\erho(\theta)}{d\theta} = 0$, i.e.\ $\frac{d\erho(\theta)}{d\theta}$ is also continuous. 

As in Lemma \ref{lemma3.5} and Remark \ref{rmk3.3}, we get that any point of $\partial \Omega_\infty^-$ is a no-escape point, i.e.\ $\mathbb{R} \times \Omega_\infty^-$ is a black hole.

\begin{remark}
The black hole constructed in this subsection may be different from the black holes constructed in the previous subsections, in the case when there is more than one black hole.\cite{EskinNonstationary} \xqed{\lozenge}
\end{remark}

\begin{remark}\label{remark3.7}
At tangential points, $\partial \Omega_\infty^-$ is $C^1$ but not $C^2$ in general, since there are characteristic curves of different families that have a common tangential point. \xqed{\lozenge}
\end{remark}

\section{Acoustic metrics and an example with corners}

\label{Acoustic}

\subsection{Acoustic metrics}

\label{Smooth}

We consider acoustic waves in a moving medium. The \emph{acoustic metric} associated to a vector field $v = (v_1,v_2)$ is the (stationary) Lorentzian metric $\frac{\rho}{c}[c^2 dx_0^2 - (dx - v dx_0)^2]$, i.e.\ the metric $g$ given by
\begin{align}\label{metric}
g_{00} = \frac{\rho}{c}(c^2-|v|^2), \quad g_{0j} = g_{j0} = \frac{\rho}{c}v_j, 1 \leq j \leq 2, \quad g_{ij} = -\frac{\rho}{c}\delta_{ij}, 1 \leq i,j \leq 2,
\end{align}
The inverse of the metric tensor is given by
\begin{align*}
g^{00} = \frac{1}{\erho c}, \quad g^{j0} = g^{0j} = \frac{1}{\rho c}v_j, 1 \leq j \leq 2, \quad g^{jk} = \frac{1}{\rho c}(v_jv_k - c^2\delta_{jk}), 1 \leq j,k \leq 2.
\end{align*}
We assume that the flow $v = (v_1,v_2)$ is irrotational, i.e. there exists a potential $\psi$ such that $v = \grad \psi$, barotropic, i.e. $p = p(\rho)$ where $p$ is the pressure and $\erho$ is the density. Moreover, $v$ and $\rho$ satisfy the continuity equation
\begin{align*}
\rho_t + \grad \cdot (\rho \grad \psi)  = 0,
\end{align*}
and the Euler equation, which can be reduced to the form \cite{VisserAcoustic}
\begin{align*}
\psi_t + h + \frac{1}{2}(\nabla \psi)^2 + \Phi = 0, 
\end{align*}
where $\Phi$ represents external forces and $h(p)$ is the specific enthalpy.

In the case when $v$ and $\rho$ satisfy these requirements, the wave equation (\ref{eqn1.1}) for a metric of the form (\ref{metric}) is a physical model for the propagation of sound waves (see \cite{VisserAcoustic}) where $c = \sqrt{\frac{dp}{d\rho}}$ is is the speed of sound.

We shall take $\rho$ to be constant. Then $p$ and $c$ are constant as well. Then by continuity equation
\begin{align*}
\Delta \psi = 0
\end{align*}
i.e. $\psi$ is a harmonic function. Rescaling, we shall assume that $c = 1$. Then the ergoregion is where $1 - |v|^2 < 0$. 

\begin{remark}
Other well-known spacetime metrics may be transformed into the form (\ref{metric}) after an appropriate choice of coordinates, including the Schwarzschild metric in Painlev\' e-Gullstrand coordinates \cite{VisserAcoustic}.\xqed{\lozenge}
\end{remark}

\begin{remark}
As was noted in the introduction, acoustic metrics are not the only physical examples of analogue (articifial) black holes. There are models for optical black holes, surface waves, relativistic acoustic waves, Bose-Einstein condensates, and others. See the references in the introduction.\xqed{\lozenge}
\end{remark}

It will be convenient to write the vector field in polar coordinates as $v = v_r \hat r + v_\theta \hat \theta$, $v_r = \frac{\partial \psi}{\partial r}$, $v_\theta = \frac{1}{r}\frac{\partial \psi}{\partial \theta}$. In this case the vector field is a solution of the Euler equations. We will specify an explicit choice of $\psi$ in the following subsection.

Let
$$v = \frac{A(r,\theta)}{r} \hat r + \frac{B(r,\theta)}{r} \hat \theta, \quad A,B \in C^\infty,$$ 
and let $g$ be the corresponding acoustic metric, which satisfies (\ref{eqn1.2})-(\ref{eqn1.4}). In polar coordinates, the form corresponding to (\ref{charnull}) is
\begin{align*}
\left(\frac{A^2}{r^2} - 1\right)\xi_r^2 + 2\frac{AB}{r^3}\xi_r\xi_\theta + \left(\frac{B^2}{r^4} - \frac 1{r^2} \right)\xi_\theta^2 = 0,
\end{align*}
i.e. $g^{rr} = \frac{A^2}{r^2}-1$, $g^{r\theta} = g^{\theta r} = \frac{AB}{r^3}$, $g^{\theta\theta} = \frac{B^2}{r^4} - \frac{1}{r^2}$. We find the solutions
\begin{align*}
\xi_\theta^\pm = \frac{ -\frac{AB}{r} \pm \sqrt{\erho}}{\frac{B^2}{r^2}-1}
\xi_r^\pm .
\end{align*}
In addition, the acoustic metric satisfies $g^{r0} = \frac{A}{r}$, $g^{\theta 0} = \frac{B}{r^2}$. Therefore the system (\ref{eqn2.4}) becomes 
\begin{equation}\label{eqn4.2}
\begin{aligned}
\frac{dr^\pm}{dx_0} 
&= \frac{g^{rr}f_2^\pm - g^{r\theta}f_1^\pm}{g^{r0}f_2^\pm - g^{\theta 0}f_1^\pm} 
= \frac{\left( \frac{A^2}{r^2} - 1 \right) f_2^\pm - \frac{AB}{r^3}f_1^\pm}{\frac{A}{r}f_2^\pm - \frac{B}{r^2}f_1^\pm}\\
\frac{d\theta^\pm}{dx_0} 
&= \frac{g^{\theta r}f_2^\pm - g^{\theta\theta}f_1^\pm}{g^{r0}f_2^\pm - g^{\theta 0}f_1^\pm}
= \frac{\frac{AB}{r^3} f_2^\pm - (\frac{B^2}{r^4}-\frac{1}{r^2})f_1^\pm}{\frac{A}{r}f_2^\pm - \frac{B}{r^2}f_1^\pm}.
\end{aligned}
\end{equation}
Near the ergosphere $A^2 + B^2 - r^2 = \erho = 0$, we can use 
\begin{equation}\label{eqn4.3}
\begin{aligned}
f_1^\pm &= \frac{AB}{r} \mp \sqrt{\erho} \\
f_2^\pm &= \frac{B^2}{r^2} - 1.
\end{aligned}
\end{equation}

\begin{remark}
Alternatively,  formulas for $f^\pm$ which are valid on all of $\Omega$, up to removable singularities, are
\begin{equation}
\begin{aligned}
f_1^\pm &= \frac{(A^2-r^2)(B \mp r)}{\frac{AB}{r} \pm \sqrt{A^2+B^2-r^2}}\\
f_2^\pm &= B \mp r.
\end{aligned}
\end{equation}\xqed{\lozenge}
\end{remark}
Denote

\begin{equation}\label{eqn4.4}
b_0 = \frac{A}{r}\left(\frac{B^2}{r^2}-1\right) - \frac{B}{r^2}\left(\frac{AB}{r}\mp \sqrt{\erho}\right) = - \frac{A}{r} \pm \frac{B}{r^2}\sqrt{\erho}
\end{equation}
Note that $b_0 > 0$ near $\erho = 0$ since $A < 0$. Therefore
\begin{equation}\label{eqn4.5}
\begin{aligned}
\frac{dr^\pm}{dx_0} 
&= \frac{( \frac{A^2}{r^2} - 1)( \frac{B^2}{r^2} - 1) - \frac{AB}{r^3}\left( \frac{AB}{r}\mp \sqrt{\erho}\right)}{b_0} 
= \frac{\pm\left(\frac{AB}{r} \mp \sqrt{\erho}\right)\sqrt{\erho}}{b_1} \\
\frac{d\theta^\pm}{dx_0} 
&= \frac{\frac{AB}{r^3}( \frac{B^2}{r^2} - 1) - ( \frac{B^2}{r^4} - \frac{1}{r^2})( \frac{AB}{r}\mp \sqrt{\erho})}{b_0} 
= \frac{\pm(\frac{B^2}{r^2}-1)\sqrt{\erho}}{b_1}.
\end{aligned}
\end{equation}
where $b_1 = r^2b_0 = -Ar \pm B \sqrt{\erho}$. 

For later, we record that in $(\erho,\theta)$ coordinates, we have
\begin{equation}\label{eqn4.6}
\begin{aligned}
\frac{d\erho^\pm}{dx_0} 
&= 2(AA_\theta + BB_\theta) \frac{d\theta^\pm}{dx_0} + 2(AA_r + BB_r - 2r)\frac{dr^\pm}{d\theta} \\
&= \frac{\pm 2(AA_\theta+ BB_\theta) ( \frac{B^2}{r^2}-1)\sqrt{\erho}}{b_1} + 2(AA_r + BB_r - r) \frac{\pm\left(\frac{AB}{r} \mp \sqrt{\erho}\right)\sqrt{\erho}}{b_1} \\
&= \pm \frac{2Q\sqrt{\erho}}{b_1} + \frac{2(r - AA_r - BB_r)\erho}{b_1}.
\end{aligned}
\end{equation}
where 
\begin{align}\label{eqn4.7}
Q = (AA_\theta + BB_\theta)\left(\frac{B^2}{r^2}-1\right) + (AA_r + BB_r-r)\frac{AB}{r}.
\end{align}
Since $b_1 > 0$, and $\frac{B^2}{r^2} - 1 < 0$ near $\erho = 0$, we have that $\frac{d\theta^\pm}{dx_0} \lessgtr 0$, i.e.\ $\theta^+(x_0)$ decreases and $\theta^-(x_0)$ increases when $x_0$ increases. We have
\begin{equation}\label{eqn4.8}
\begin{aligned}
\frac{d\erho^\pm}{d\theta} 
&= \frac{2Q}{\frac{B^2}{r^2}-1} \mp \frac{2(AA_r + BB_r - r) \sqrt{\erho}}{\frac{B^2}{r^2}-1}.
\end{aligned}
\end{equation}
Therefore $\erho^\pm = \erho^\pm(\theta)$ is tangential to $\erho = 0$ if and only if $Q = 0$.

It follows from (\ref{eqn4.6}) that near $\erho = 0$, $\frac{d\erho^+}{dx_0}< 0$ when $Q < 0$ and $\frac{d\erho^+}{dx_0} > 0$ when $Q > 0$. Therefore $(\erho^+(x_0),\theta^+(x_0))$ ends on $\erho = 0$ when $Q < 0$ and $(\erho^+(\theta_0),\theta^+(x_0))$ starts on $\erho = 0$ when $Q > 0$. Similarly
 $(\erho^-(x_0),\theta^-(x_0))$ starts on $\erho = 0$ when $Q < 0$ and ends on $\erho = 0$ when $Q > 0$.

\subsection{Example of an acoustic black hole with a corner}\label{sb4.2}

Consider a potential
$$
\psi = A_0 \log r + \epsilon r\sin\theta, \qquad A_0 < -1,\ 0 < \epsilon < 1,
$$
so that
$$
A = r\frac{\partial\psi}{\partial r} = A_0 + \epsilon r\sin\theta, \quad B = \frac{\partial \psi}{\partial \theta} =  \epsilon r\cos\theta.
$$
In $(w,\theta)$ coordinates, from (\ref{eqn4.5}),(\ref{eqn4.6}) we have
\begin{equation}\label{eqn4.9}
\begin{aligned}
\frac{dw^\pm}{dx_0} 
&= \frac{\pm Q + (r-(A_0 + \epsilon r\sin\theta)\epsilon\sin\theta - (\epsilon r\cos\theta) 
\epsilon \cos\theta) w}{-(A_0+\epsilon r\sin\theta) r \pm (\epsilon r\cos\theta) w} 
\\
\frac{d\theta^\pm}{dx_0}
&= \frac{\pm ((\epsilon \cos\theta)^2-1)w}{-(A_0+\epsilon r\sin\theta) r \pm (\epsilon r\cos\theta) w} .
\end{aligned}
\end{equation}
where 
\begin{align*}
Q &= [(A_0 + \epsilon r\sin\theta)\epsilon r\cos\theta + (\epsilon r\cos\theta)(-\epsilon r\sin\theta)](\epsilon^2 \cos^2\theta-1) \\
& \quad + 
[(A_0 + \epsilon r\sin\theta)\epsilon r\sin\theta + (\epsilon r\cos\theta)^2 - r^2](A_0 + \epsilon r\sin\theta)\epsilon r\cos\theta/r^2\\
&= \epsilon \cos\theta( A_0^2\epsilon\sin\theta + r^2\epsilon(\epsilon^2-1)\sin\theta + 2A_0 r(\epsilon^2-1)).
\end{align*}
The equation of the ergosphere $w = 0$ is $(A_0+\epsilon r\sin\theta)^2 + (\epsilon r\cos\theta)^2 - r^2 = 0$, which gives
\begin{align*}
r = r_0(\theta)
&= \frac{A_0 \epsilon \sin \theta + \sqrt{A_0^2 \epsilon^2 \sin^2 \theta + A_0^2(1-\epsilon^2)}}{1 - \epsilon^2}\\
&= \frac{-A_0}{1-\epsilon^2}(-\epsilon \sin \theta + \sqrt{1 - \epsilon^2 \cos^2 \theta}).
\end{align*}
Note that $r_0(\pi/2) = \frac{-A_0}{1+\epsilon}$, $r_0(-\pi/2) = \frac{-A_0}{1-\epsilon}$, and $\frac{-A_0}{1+\epsilon} \leq r(\theta) \leq \frac{-A_0}{1-\epsilon}$ 
for all $\theta$. 

When $w = 0$, we have $Q = -2A_0 (\epsilon r\cos\theta) (A_0 + \epsilon r\sin\theta)$. 
Thus there are tangential points where $w = 0$ and $\theta = \pm \frac{\pi}{2}$. If $w = 0$ and $\theta \neq \pm \pi/2$, 
we can only have tangential points when $A_0  + \epsilon r\sin\theta = 0$ and hence $(\epsilon r\cos\theta)^2 = r^2$, which is impossible when $|\epsilon| < 1$. 

\begin{itemize}
\item At the point $w = 0$, $\theta = \pi/2$, the linearization in $(w,\theta)$ has the Jacobian  matrix
\begin{align*}
\begin{bmatrix} 
-\frac{(1+\epsilon)^2}{A_0} & \mp 2\epsilon(1+\epsilon) \\
\mp \frac{(1+\epsilon)^2}{A_0^2} & 0
\end{bmatrix}
\end{align*}
which has determinant $-2\epsilon(1+\epsilon)^3/A_0^2 < 0$. Therefore $w= 0$, $\theta = \pi/2$ is a saddle point.  

\item At the points $w = 0$, $\theta = -\pi/2$, the linearization in $(w,\theta)$ has the Jacobian matrix
\begin{align*}
\begin{bmatrix}
-\frac{(1-\epsilon)^2}{A_0} & \pm 2\epsilon(1-\epsilon) \\
\mp \frac{(1-\epsilon)^2}{A_0^2} & 0
\end{bmatrix}
\end{align*}
which has determinant $2\epsilon(1-\epsilon)^3/A_0^2 > 0$, trace $-(1-\epsilon)^2/A_0 > 0$, and discriminant $(1-\epsilon)^4/A_0^2 - 8\epsilon(1-\epsilon)^3/A_0^2 = (1-\epsilon)^3(1-9\epsilon)/A_0^2$. Therefore $w = 0$, $\theta = -\pi/2$ is an unstable node for $0 < \epsilon < \frac{1}{9}$ and an unstable spiral for $\frac{1}{9} < \epsilon < 1$.  
\end{itemize}

In the next subsection we will show that from these calculations we can conclude that the black hole has a corner whenever the second critical point is a spiral.

\begin{figure}
 \centering
 \begin{subfigure}{.5\textwidth}
   \centering
   \includegraphics[width=.7\linewidth]{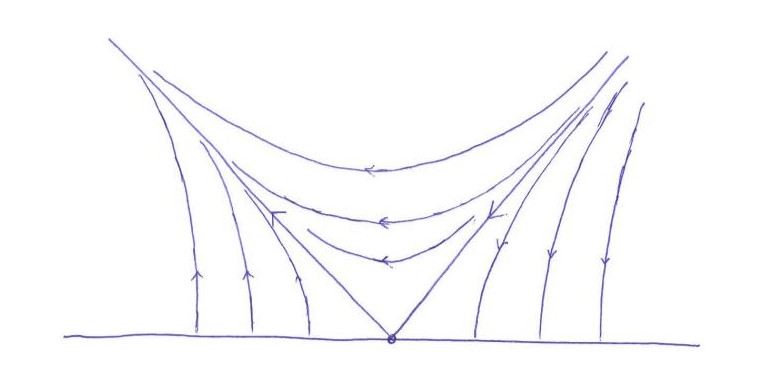}
   \caption{$(w,\theta)$ coordinates}
   \label{saddlerho}
 \end{subfigure}%
 \begin{subfigure}{.5\textwidth}
   \centering
   \includegraphics[width=.7\linewidth]{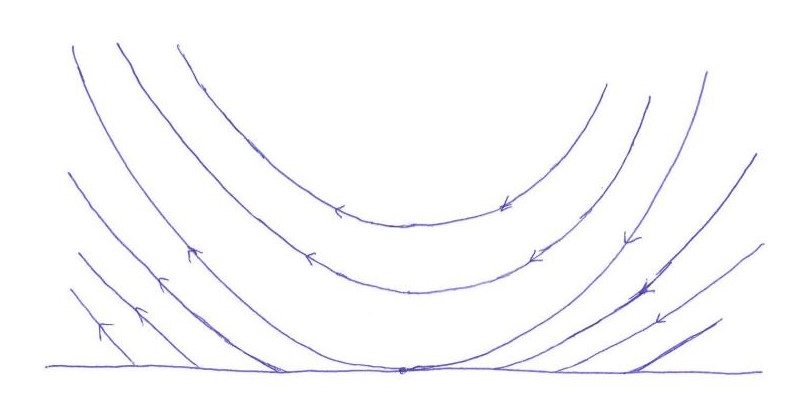}
   \caption{$(\erho,\theta)$ coordinates}
   \label{saddler}
 \end{subfigure}
 \caption{
The qualitative picture near
 a saddle point.}
 \label{fig2}
\end{figure}
\begin{figure}
 \begin{subfigure}{.5\textwidth}
   \centering
   \includegraphics[width=.7\linewidth]{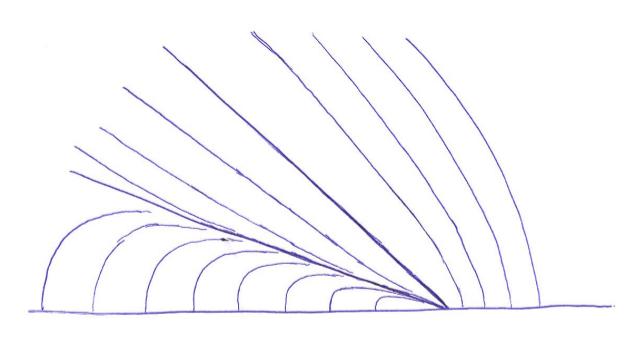}
   \caption{$(w,\theta)$ coordinates}
   \label{noderho}
 \end{subfigure}
 \begin{subfigure}{.5\textwidth}
   \centering
   \includegraphics[width=.7\linewidth]{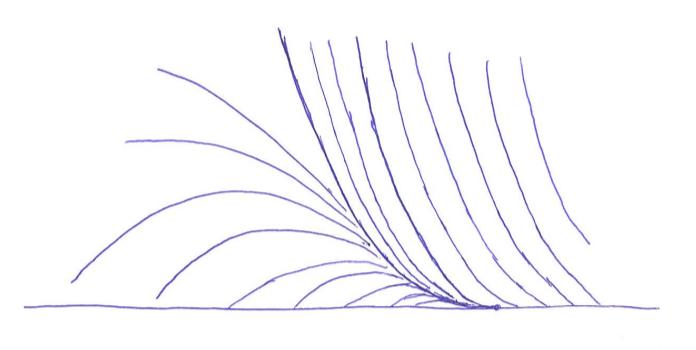}
   \caption{$(\erho,\theta)$ coordinates}
   \label{noder}
 \end{subfigure}
 \caption{
The qualitative picture near a node. }
 \label{fig3}
 \end{figure}
 \begin{figure}
 \begin{subfigure}{.5\textwidth}
   \centering
   \includegraphics[width=.9\linewidth]{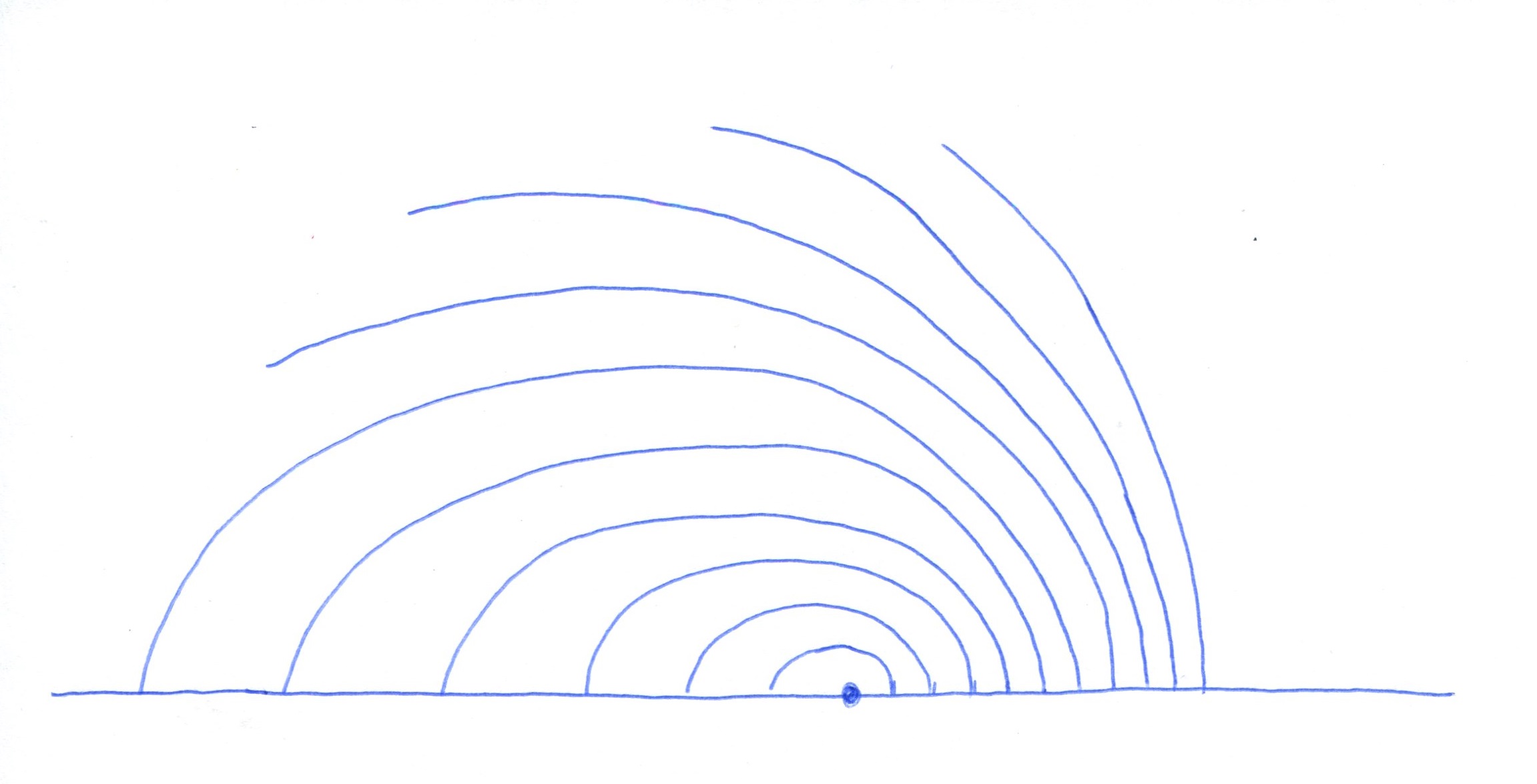}
   \caption{$(w,\theta)$ coordinates}
   \label{spiralrho}
 \end{subfigure}
 \begin{subfigure}{.5\textwidth}
   \centering
   \includegraphics[width=.9\linewidth]{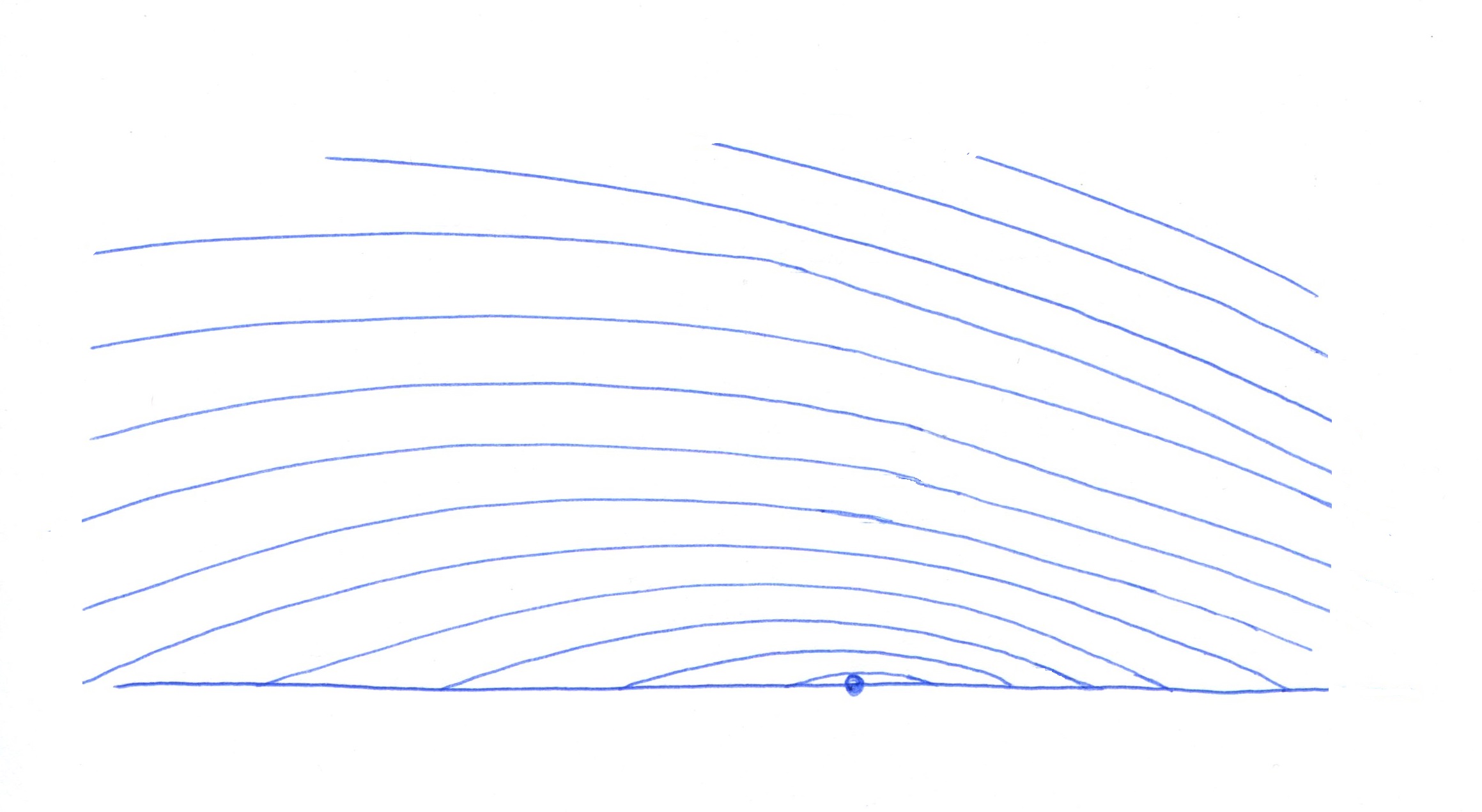}
   \caption{$(\erho,\theta)$ coordinates}
   \label{spiralr}
 \end{subfigure}
 \caption{
The qualitative picture near a spiral.}
 \label{fig4}
 \end{figure}

 \begin{figure}
 \centering
 \begin{subfigure}{.5\textwidth}
   \centering
   \includegraphics[width=.9\linewidth]{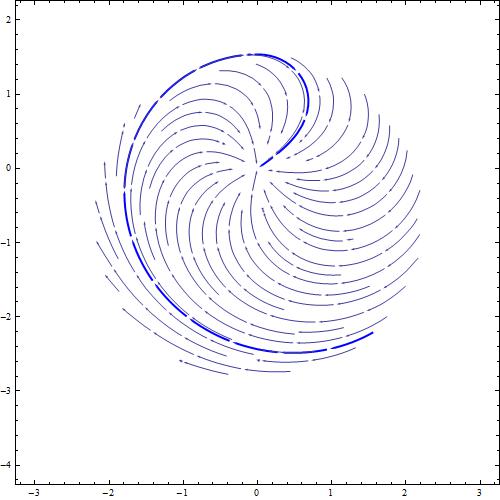}
   \caption{Trajectories for the $\plus$ family.}
   \label{Epsp3}
 \end{subfigure}%
 \begin{subfigure}{.5\textwidth}
   \centering
   \includegraphics[width=.9\linewidth]{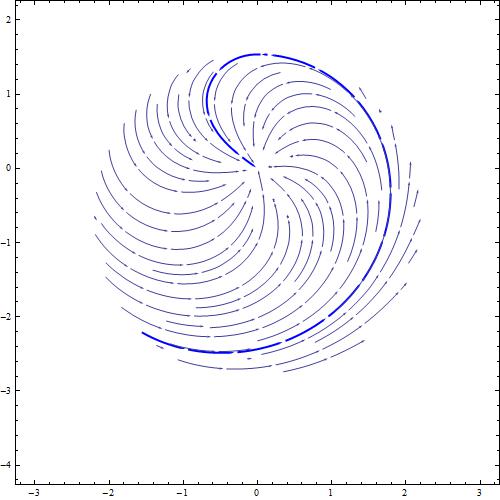}
   \caption{Trajectories for the $\minus$ family.}
   \label{Epsp3minus}
 \end{subfigure}
 \caption{Numerically plotted trajectories for (\ref{eqn4.5}) with $A = A_0 + \epsilon r \sin\theta$, $B = \epsilon r \cos\theta$, $A_0 = -2.0$, $\epsilon = 0.3$. The bold trajectories pass through $\theta = -\pi/2$, $r = 2.4350096$.}
 \label{fig1}
 \end{figure}

\subsection{Phase portrait with two critical points}

In this subsection we describe the generic phase portrait when there are two critical points. 
\subsubsection{One saddle and one spiral}
Consider first the case of one saddle point $\alpha_1 = \{ \erho =0,\ \theta = \pi/2 \}$ and one spiral $\alpha_2 = \{ \erho = 0,\ \theta = -\pi/2 \}$. Let us assume, to fix ideas, that that point $\alpha_1 = \{ \erho = 0,\ \theta = \pi/2 \}$ is a saddle, the point $\alpha_2 = \{ \erho = 0,\ \theta = -\pi/2 \}$ is an unstable spiral and the $\plus$ trajectories end on $\{ \erho = 0, \ 3\pi/2 < \theta < \pi/2 \}$ and start on $\{ \erho = 0,\ -\pi/2< \theta < \pi/2 \}$ when $x_0$ increases. Note that $\theta = -\pi/2 = 3\pi/2 \pmod{2\pi}$ is the same point. 

The $\plus$ trajectory $\gamma^+$ that ends at $\alpha_1$ must start at some point $\{ \erho = 0,\ \theta = \theta^+\}$ where $-\pi/2 < \theta^+ < \pi/2$. The $\plus$ trajectories starting on $\{ \erho = 0,\ -\pi/2 < \theta < \theta^+\}$ must end on $\{ \erho = 0,\ \pi/2 < \theta < 3\pi/2 \}$. The $\plus$ trajectories starting on $\{ \erho =0,\ \theta^+ < \theta \leq \pi/2\}$ must approach $O$ when $x_0$ increases. Therefore the set $\Omega^+$ of all $\plus$ trajectories ending at $O$ is bounded by $\gamma^+$ and $\{ \erho = 0,\ \theta^+ \leq \theta \leq \pi/2\}$. Analogously there exists a $\minus$ trajectory $\gamma^-$ that ends at $\alpha_1$ and starts at some point $\{ \erho = 0,\ \theta = \theta^-\}$ with $\pi/2 < \theta^- < 3\pi/2$. The set $\Omega^-$ of all $\minus$ trajectories ending at $O$ is bounded by $\gamma^-$ and $\{ \erho = 0,\ \pi/2 \leq \theta \leq \theta^- \}$. Thus the black hole $\Omega_0 = \Omega^+ \cap \Omega^-$ is bounded by segments of $\gamma^+$ and $\gamma^-$ which meet at a corner point.
The numerically computed phase portraits in Figure \ref{fig1} for $A = A_0 + \epsilon r \sin\theta$, $B = \epsilon r \cos\theta$, with $A_0 = -2.0$ and $\epsilon = 0.3$, indicate trajectories approximating $\gamma^+$ and $\gamma^-$ as described above. Combining the pictures in Figure \ref{Epsp3} and Figure \ref{Epsp3minus} we get a black hole with a corner.
 See also Figure \ref{fig5}.

\begin{figure}
\begin{centering}
  \includegraphics[width=.4\linewidth]{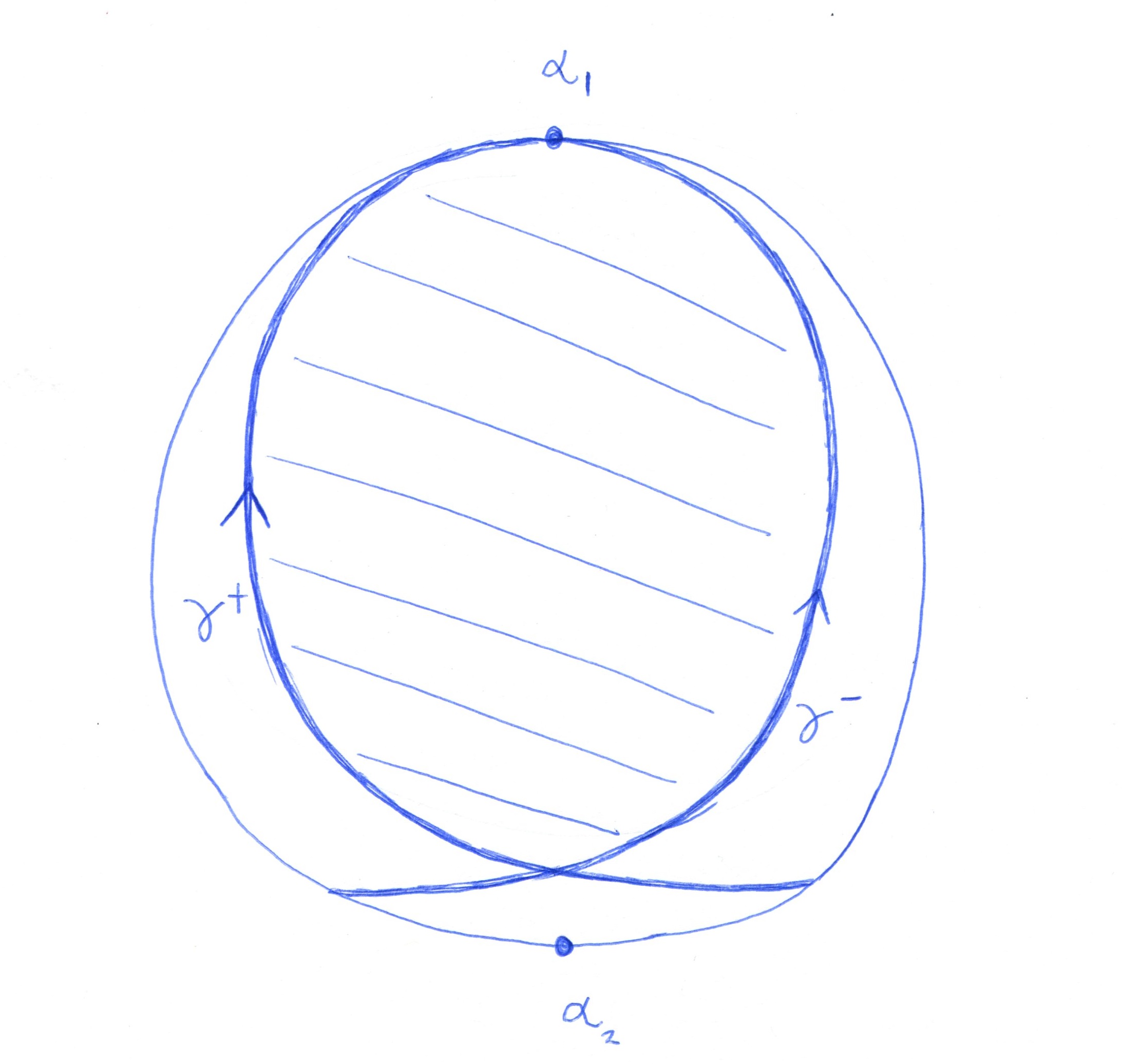}
  \caption{Qualitative sketch of a black hole with a corner in the case of two critical points.}
  \label{fig5}
\end{centering}
\end{figure}

%

\subsubsection{One saddle and one node}

Now we consider the slightly more difficult case where there is one saddle point and one node.  As before we assume that the $\plus$ trajectories end on $\{ \erho = 0,\ \pi/2 < \theta < 3\pi/2 \}$ and start on $\{ \erho = 0,\ -\pi/2 < \theta < \pi/2 \}$, and let assume that $\alpha_1 = \{ \erho = 0,\ \theta = \pi/2\}$ is a saddle point and $\alpha_2 = \{ \erho = 0,\ \theta = 3\pi/2\}$ is an unstable node. 

Consider all $\plus$ trajectories that start at the node $\alpha_2$. There are two cases. 

In the first case, the endpoints of the $\plus$ trajectories starting at the node cover the interval $\{ \erho = 0,\ \pi/2 < \theta < 3\pi/2\}$ of the ergosphere. It follows that there is a $\plus$ trajectory $\gamma_1^+$ starting at the node $\alpha_2$ and ending at the saddle point $\alpha_1$. More precisely, $\gamma_1^+$ approaches the node when $x_0 \to -\infty$ and approaches the saddle when $x_0 \to +\infty$. There can be $\plus$ trajectories emerging from the node that do not end on $\{ \erho = 0, \ \pi/2 \leq \theta \leq 3\pi/2 \}$. These trajectories must all end at the singularity $O$. Also, all $\plus$ trajectories starting on $\{ \erho = 0,\ -\pi/2 < \theta < \pi/2 \}$ end at $O$.  Therefore, the set $\Omega^+$ of all trajectories that end at $O$ is bounded by $\gamma_1^+$ and the part of the ergosphere $\{ \erho = 0,\ -\pi/2 \leq \theta \leq \pi/2\}$.

In the second case there exists $\pi/2 < \theta_1^+ < 3\pi/2$ such that the endpoints of the $\plus$ trajectories starting at the node cover the interval $\{ \erho = 0, \ \theta_1^+ \leq \theta < 3\pi/2\}$ of the ergosphere. Therefore there is a $\plus$ trajectory $\gamma_2^+$ ending at the saddle point $\alpha_2$ that starts at some point $\{ \erho = 0,\ \theta = \theta_2^+\}$, where $-\pi/2 < \theta_2^+ < \pi/2$. All $\plus$ trajectories starting on $\{ \rho = 0,\ -\pi/2 < \theta < \theta_2^+\}$ end on $\{ \erho = 0,\ \theta_1^+ < \theta < \pi/2 \}$ and all $\plus$ trajectories starting on $\{ \erho =0,\ \theta_2^+ < \theta < \pi/2 \}$ end at the singularity $O$, including the $\plus$ trajectory starting at $\alpha_2 = \{\erho =0,\ \theta = \pi/2 \}$. Therefore the set $\Omega^+$ of all $\plus$ trajectories approaching $O$ is bounded by $\gamma_2^+$ and the part of the ergosphere $\{ \erho =0,\ \theta_2^+ < \theta < \pi/2 \}$. 

For the $\minus$ trajectories there are also two cases. In one case there is a $\minus$ trajectory $\gamma_1^-$ that starts at some point $\{ \erho =0,\ \theta = \theta_2^-\}$, where $\pi/2 < \theta_2^- < 3\pi/2$, and ends at the saddle point $\alpha_1$. The set $\Omega^-$ of $\minus$ trajectories ending at $O$ is bounded by $\gamma_2^-$ and the part of the ergosphere $\{ \erho = 0,\ \pi/2 < \theta < \theta_2^-\}$. In the other case $\Omega^-$ is bounded by a $\minus$ trajectory $\gamma_2^-$ starting at $\alpha_2$ and ending at $\alpha_1$, and by the part of the ergosphere $\{\erho = 0,\ \pi/2 < \theta < 3\pi/2 \}$. The black hole $\Omega_0$ is the intersection of $\Omega^+$ and $\Omega^-$. Therefore $\Omega_0$ is bounded by (parts of) $\gamma_1^+$ or $\gamma_2^+$ or $\gamma_1^-$ or $\gamma_2^-$. Only in the case when $\Omega_0$ is bounded by $\gamma_1^+$ and $\gamma_1^-$ is the boundary $\partial \Omega_0$ smooth. In the three other cases $\partial \Omega_0$ has a corner points. We do not present numerical investigations of this case.

As in Remark \ref{remark3.7}, we note that even when $\partial \Omega_0$ is smooth it is $C^1$ but may not be $C^2$ since $\partial \Omega_0$ consists of two smooth curves $\gamma_1^+,\gamma_2^+$ tangential to the ergosphere at $\alpha_1$ and $\alpha_2$ and belonging to different families. 

\section{Determination of black holes by boundary measurements}

Let
\begin{equation}\label{eqn5.1}
Lu=0
\end{equation}
be the equation  (\ref{eqn1.1}) in the cylinder $\mathbb R\times\mathcal D$,  where 
$\D\subseteqq\R^2$
is a bounded domain with smooth  boundary  $\partial\D$  such that the ergoregion $\Omega=\{g_{00}(x)<0\}$  is contained  inside $\D$.   
Consider  the initial-boundary  value problem  for (\ref{eqn5.1})  in $\R\times\D$  with the boundary and initial conditions
\begin{align}
\label{eqn5.2}
&u\big|_{\R\times\partial D}=f,
\\
\label{eqn5.3}
&u=0\ \ \mbox{for}\ \ x_0\ll 0,\ \ x\in \D,
\end{align}
where $f$  has compact  support  in $\R\times\D$.  Let  $\Lambda$  be the Dirichlet-to-Neumann (DN)  operator  on  $\R\times\partial\D$,  i.e.
\begin{equation}\label{eqn5.4}
\Lambda f=\sum_{j,k=0}^n g^{jk}(x)\frac{\partial u}{\partial x_j}\nu_k(x)\Big(\sum_{p,r=0}^n g^{pr}(x)\nu_p(x)\nu_r(x)\Big)^{-\frac{1}{2}}\Big|_{R\times\partial\D},
\end{equation}
where  $n=2,\ \nu=(\nu_1,\nu_2)$  is  the outward  unit  normal  to  $\partial\D$,  and  $u=u(x_0,x)$  is the solution  of (\ref{eqn5.1}), (\ref{eqn5.2}),  (\ref{eqn5.3}).

Let $\Gamma$  be  any open  subset  of  $\partial\D$.   We say that  boundary measurements are performed on $\R\times\Gamma$  if
we are  able  to measure  the restriction $\Lambda f\big|_{\R\times\Gamma}$  for any smooth input  $f$  with support  in $\R\times\overline\Gamma$.

Let $x'=\phi(x)$ be a diffeomorphism  of  $\overline \D$  onto $\overline\D$  such  that  $\phi(x)=x$  on $\Gamma$.  Let  $a(x)\in C^\infty(\overline\Omega)$ 
 be such  that $a(x)=0$  on $\Gamma$.   It is well-known  that  if  we  make  a change of variable
 \begin{equation}\label{eqn5.5}
 x'=\phi(x),\ \ \ x_0'=x_0+a(x),
 \end{equation}
then  in coordinates $(x_0',x')$  we get  an initial-boundary  value  problem  similar  to (\ref{eqn5.1}), (\ref{eqn5.2}),  (\ref{eqn5.3})
such that 
\begin{equation}\label{eqn5.6}
\Lambda f\big|_{\R\times\Gamma}=\Lambda'f\big|_{\R\times \Gamma},
\end{equation}
for all $f$
with support  in $\R\times\overline \Gamma$,   where  $\Lambda'$  is the DN  operator  in  $(x_0',x')$  coordinates.   Therefore  we have  to study  
the determination  of the metric  from  boundary  measurements  on  $\R\times\Gamma$  only  modulo  changes  of variables  of the  form  (\ref{eqn5.5}).
It was proven in [Esk10b]   for  $n\geq 2$  that 
boundary measurements  on $\R\times \Gamma$  allow  recovery  of the ergosphere   
$\partial \Omega=\{ g_{00}=0\}$   and  the metric  on $\overline \D\setminus \Omega$  up  to changes of variables (\ref{eqn5.5}).

It follows  from the considerations  in [Esk08]  that if at least  one point of $\partial\Omega$  is characteristic,  then it is necessary  to spend  
infinite time to recover
the ergosphere  $\partial\Omega$,  i.e.  for any $T$,  boundary  measurements on $[0,T]\times \Gamma$  do not determine $\partial\Omega$  in a neighborhood  of 
the characteristic points.  However,  if all points  of  $\partial\Omega$  are not characteristic then there exists $T_0$  such that boundary  measurements  on  
$[0,T_0]\times\Gamma$  determine the ergosphere and the metric $[g_{jk}]_{j,k=0}^n$  on the ergosphere  up to diffeomorphisms (\ref{eqn5.5}).

In this section,  in the case $n=2$,  we expand  upon  the result of  [Esk10b],  treating the recovery  of a black hole  inside $\Omega$.  It follows  from  the results 
of Section 3  that  if  $b_1(\theta)<0$  (see  (\ref{eqn1.2}), (\ref{eqn1.3}),  (\ref{eqn1.4}))  and if  $\partial\Omega$  is  not characteristic   then there exists a black hole  
  $\Omega_0$  inside  $\Omega$  and the black  hole event  horison   $\partial\Omega_0$  is smooth.

Note that  the  equations
for  black holes depend  only on  the spatial  part  
$G=
\begin{bmatrix}
g^{\rho\rho} & g^{\rho\theta} \\
g^{\theta\rho} & g^{\theta\theta}
\end{bmatrix}
$
of the inverse metric tensor  $[g^{ij}]_{i,j=0}^2$.   Introduce  coordinates  $(\rho,\theta)$  in $\overline \Omega\setminus \Omega_0,\ \theta\in \R/2\pi\mathbb Z,
\ 0\leq \rho\leq \rho_0(\theta)$  in  $\overline \Omega\setminus\Omega_0$,  extending  those in Section 2.2,   so that  $\rho =-\Delta$  near  $\partial\Omega$  
and  $\rho=0$  is  the equation  of  $\partial\Omega$,  and  chosen so   that the event  horizon $\partial\Omega_0$  is  a graph  given   by  $\rho=\rho_0(\theta)$.

Consider the equation  for characteristics $\phi^\pm=\phi^\pm(\rho,\theta)$:
\begin{equation}\label{eqn5.7}
g^{\rho\rho}(\phi_\rho^\pm)^2+2g^{\rho\theta}\phi_\rho^\pm\phi_\theta^\pm+g^{\theta\theta}(\phi_\theta^\pm)^2=0.
\end{equation}
It follows  from (\ref{eqn5.7})  that the matrices  $G(\rho,\theta)$  and  $\lambda(\rho,\theta)G(\rho,\theta)$,  where   $\lambda(\rho,\theta)\neq 0$,  
produce the same characteristics 
equation,  i.e.  the characteristic  equations  (and black holes)  do not  depend on the scaling  factor  $\lambda(\rho,\theta)$.

Thus assuming that  $g^{\rho\rho}\neq 0$  for all  $0\leq \rho\leq \rho_0$,  we get 
\begin{equation}\label{eqn5.8}
\phi_\rho^\pm=\frac{-g^{\rho\theta}\pm\sqrt{\rho_1}}{g^{\rho\rho}}\phi_\theta^\pm,
\end{equation}
where  we have  used  that  $g^{\rho\rho}g^{\theta\theta}-(g^{\rho\theta})^2=-\rho_1,\ \rho_1=C^2\rho$.   We impose the following  boundary
conditions  on  $\phi^+(\rho,\theta)$  and  $\phi^-(\rho,\theta)$  when  $\rho=0$:
\begin{equation}\label{eqn5.9}
\phi^\pm(0,\theta)=\theta,\ \ \ \theta\in [0,2\pi].
\end{equation}
Consider  the  curves  $\phi^+(\rho,\theta)=\theta_0,\ \ \phi^-(\rho,\theta)=\theta_0$   for fixed  $\theta_0\in \R/2\pi\mathbb Z$.   It was shown  in 
[Esk10]  (see also Section 3)   that  it is  possible   to use  the time  variable  $x_0$   as a parameter   for both curves.  One  of the curves,  say  
$\phi^-(\rho,\theta)=\theta_0$,  starts  at  $(0,\theta_0)$   when $x_0=t_0$  and approaches  the singularity  at  $0\in  \Omega_0$   when 
$x_0\rightarrow +\infty$,  crossing the  event   horizon  $\partial \Omega_0$   at some time $t$.  The second  curve  $\phi^+(\rho,\theta)=\theta_0$  
ends  at  $(0,\theta_0)$  as  $x_0$  increases.  When  $x_0\rightarrow  -\infty$,  the  curve  $\phi^+(\rho,\theta)=\theta_0$   spirals  around  the event
horizon   $\partial\Omega$.  For  definiteness supppose  $\phi^+=\theta_0$  spirals  
counter-clockwise  when  $x_0\rightarrow -\infty$.  Note that 
\begin{equation}\label{eqn5.10}
g^{\rho\rho}\phi_\rho^\pm+g^{\rho\theta}\phi_\theta^\pm=0\ \ \mbox{at}\ \ (0,\theta_0),
\end{equation}
Note that $\rho(\theta)$  is a periodic function on  $(-\infty,\infty)$.  Make  the change of variables
\begin{equation}\label{eqn5.11}
 \sigma=\phi^+(\rho,\theta),\ \ \ \ \tau=\phi^-(\rho,\theta),
 \end{equation}
 where  $(\rho,\theta)\in \Pi$.  The Jacobian  is
 \begin{equation}\label{eqn5.12}
 \frac{\partial(\sigma,\tau)}{\partial(\rho,\theta)}=\phi_\rho^+\phi_\theta^- -\phi_\rho^-\phi_\theta^+=
 \frac{(-g^{\rho\theta}+\sqrt{\rho_1})\phi_\theta^+\phi_\theta^-}{g^{\rho\rho}}-
 \frac{(-g^{\rho\theta}-\sqrt{\rho_1})\phi_\theta^-\phi_\theta^+}{g^{\rho\rho}}=
  \frac{2\sqrt{\rho_1}}{g^{\rho\rho}}\phi_\theta^+\phi_\theta^-.
\end{equation}
Therefore   the map  (\ref{eqn5.11})   is one-to-one  when  $\rho>0$  and it  is not  smooth  when  $\rho=0$  and  it is  not smooth when  $\rho=0$.
Note  that $\phi^+(\rho,\theta)=\theta_0$  approaches  $+\infty$  when  $x_0\rightarrow -\infty$.   Make a new change  of variables
\begin{equation}\label{eqn5.13}
y_1=\frac{\sigma+\tau}{2}=\frac{\phi^+(\rho,\theta)+\phi^-(\rho,\theta)}{2},
\ \ \ y_2=\frac{\sigma-\tau}{2}=\frac{\phi^+(\rho,\theta)-\phi^-(\rho,\theta)}{2}.
\end{equation}
Note  that 
\begin{equation}\label{eqn5.14}
y_1\big|_{\rho=0}=\theta,\ \ y_2\big|_{\rho=0}=0,\ \ \ \ -\infty<\theta<\infty.
\end{equation}
We have that  $y_2\rightarrow +\infty$  when  $\sigma\rightarrow +\infty$.  Denote  the map (\ref{eqn5.13}) by $\Phi$.   Thus  $\Phi$  maps  $\Pi$  
onto  the half-plane  $\R^2=\{-\infty<y_1<\infty,\ \ y_2>0\}$.   It follows   from  (\ref{eqn5.14})   that the map $\Phi$  is the identity
on  $\{\rho=0\}$.   Characteristic  curves $\phi^+=c_+,\ \phi^-=c_-$  become  $y_1+y_2=c_+,\ y_1-y_2=c_-$  after the map  $\Phi$.
Varying  $c_+,c_-$  we can  fill  the half-plane  $\{y_1\in \R,y_2\geq 0\}$.
Note  that the event  horizon  $\rho=\rho_0(\theta)$  is the boundary  of the strip  $\Pi$.
Note  that  the matrix  $G$  has  the following  form  after  applying  the map $\Phi$:
\begin{equation}\label{eqn5.15}
G'=\frac{1}{4}\hat g^{\sigma\tau}
\begin{bmatrix}
-1 & 0 \\
0 & 1
\end{bmatrix}
=\Phi G\Phi^t,
\end{equation}
where 
\begin{equation}\label{eqn5.16}
\begin{aligned}
\hat g^{\sigma\tau}
&=g^{\rho\rho}\phi_\rho^+\phi_\rho^- +g^{\rho\theta}(\phi_\theta^+\phi_\rho^- +\phi_\rho^+\phi_\theta^-)+g^{\theta\theta}\phi_\theta^+\phi_\theta^-
\\
&=\Bigg[g^{\rho\rho}
\frac{(-g^{\rho\theta}+\sqrt{\rho_1})}{g^{\rho\rho}}
\ \frac{(-g^{\rho\theta}-\sqrt{\rho_1})}{g^{\rho\theta}}
\\
&+g^{\rho\theta}
\Bigg(\frac{-g^{\rho\theta}+\sqrt{\rho_1}}{g^{\rho\rho}}
+
\frac{-g^{\rho\theta}-\sqrt{\rho_1}}{g^{\rho\rho}}\Bigg)
+g^{\theta\theta}\Bigg]\phi_\theta^+\phi_\theta^-
\\
&=\Bigg[
\frac
{(g^{\rho\theta})^2-\rho_1}
{g^{\rho\rho}}
+\frac{-2(g^{\rho\theta})^2}{g^{\rho\rho}}+g^{\theta\theta}\Bigg]\phi_\theta^+\phi_\theta^-
\\
&=
\frac{-2C^2\rho}{g^{\rho\rho}}\phi_\theta^+\phi_\theta^-,
\end{aligned}
\end{equation}
since  $\rho_1=C^2\rho$.

Suppose  we have
another metric  $[g_1^{jk}(\rho,\theta)]_{j,k=0}^2$ having  the same  
boundary measurements  on $\R\times\Gamma$.  Then
 metrics $g,g_1$  are  the same  in $\D\setminus \Omega$  up to  changes  of variables (\ref{eqn5.5})  (cf. [Esk10b]).
Therefore  we may  assume  that  the ergosphere $\partial\Omega$  for both metrics is the same  and the restriction of both metrics to
$\partial\Omega$  is also  the same.  Suppose  $\phi_1^\pm$  satisfy  (\ref{eqn5.8}),  (\ref{eqn5.9})  with $g$  replaced  by  $g_1$.
Let  $\phi_1^\pm(0,\theta)=\theta$  for all $\theta\in \R/2\pi\mathbb Z$.  Make the change  of variables 
$\sigma_1=\phi_1^+(\rho,\theta),\ \tau_1=\phi_1^-(\rho,\theta),\ \theta\in \R,\ 0\leq \rho_0^{(1)}(\theta)$,  where $\rho=\rho_0^{(1)}(\theta)$  is the equation  
of the event horizon $\partial\Omega_0'$.  Make  also  the change of variables
\begin{equation}\label{eqn5.17}
y_1'=\frac{\sigma_1+\tau_1}{2},\ \ \ y_2'=\frac{\sigma_1-\tau_1}{2}.
\end{equation}
Denote the map (\ref{eqn5.17})  by $\Phi_1$.  Thus $\Phi_1$  maps  $\Pi'=\{\theta\in\R,\ 0\leq \rho <\rho_0'(\theta)\}$  onto  $\R_+^2=\{y_1\in \R,\ y_2'\geq 0\}$. 
 Note that  $\Phi_1$  is a homeomorphism 
and $\Phi_1$  is  the identity  on  $\{-\infty<\theta<\infty,\ \rho=0\}$.

Let $G_1=\begin{bmatrix}
g_1^{\rho\rho} & g_1^{\rho\theta} \\
g_1^{\rho \theta}& g_1^{\theta\theta}
\end{bmatrix}
$
be the spacial  part  of the inverse  metric tensor  $g_1$.   Making the change  of variables (\ref{eqn5.17})
we get  analogously  to (\ref{eqn5.15})
\begin{equation}\label{eqn5.18}
G_1'=\frac{1}{4}\hat g_1^{\sigma_1\tau_1}
\begin{bmatrix}
-1 & 0 \\
0 & 1
\end{bmatrix}
=\Phi_1 G_1\Phi_1^t,
\end{equation}
where  analogously to (\ref{eqn5.16})
\begin{equation}\label{eqn5.19}
\hat g_1^{\sigma_1\tau_1}=
\frac{-2C_1^2\rho}{g_1^{\rho\rho}}\phi_{1\theta}^+\phi_{1\theta}^-,\ \ \  C_1>0.
\end{equation}
Combining  (\ref{eqn5.15})   and (\ref{eqn5.18})  we get
\begin{equation}\label{eqn5.20}
G_1=\lambda\Phi_1^{-1}\Phi G(\Phi_1^{-1}\Phi)^t
\end{equation}
where  
\begin{equation}\label{eqn5.21}
\lambda=\hat g_1^{\sigma_1\tau}(\hat g^{\sigma\tau})^{-1}.
\end{equation}
It follows  from   
 (\ref{eqn5.16}),  (\ref{eqn5.19})
that  $\lambda\neq 0$  for  $\rho\geq  0$  and  smooth.

Analogously  to (\ref{eqn5.1})
we have that  $\frac{\partial(\sigma_1,\tau_1)}{\partial (\rho,\theta)}=\frac{2C_1\sqrt \rho}{g_1^{\rho\rho}}\phi_{1\theta}^+\phi_{1\theta}^-$.   Thus
$\frac{\partial(\sigma_1,\tau_1)}{\partial (\sigma,\tau)}
=\frac{\partial(\sigma_1,\tau_1)}{\partial (\rho,\theta)}\big(\frac{\partial(\sigma,\tau)}{\partial (\rho,\theta)}\big)^{-1}\neq 0$
and smooth for  $\rho\geq  0$.

Therefore  the map  $\Phi_1^{-1}\Phi$  of  $\Pi=\{\theta\in \R,\ 0\leq \rho<\rho_0(\theta)\}$  into  $\Pi'=\{\theta\in \R,\ 0\leq \rho<\rho_0'(\theta)\}$  is
a diffeomorphism  and  $\Phi_1^{-1}\Phi$  is  the identity  on  $\{\rho=0,\theta\in \R\}$.

Taking the closure  of  $\Pi$  and  $\Pi'$ we  get  a diffeomorphism  of the event  horizons  $\partial\Omega_0$  and $\partial\Omega_0'$.
Thus  we have proven  that the event  horizon  $\partial\Omega_0$  is determined  uniquely  up  to diffeomorphism equal  to the identity  on the ergosphere.

\newcommand{\etalchar}[1]{$^{#1}$}



\end{document}